\newcommand{\bd}[1]{#1}
\newcommand{\argmin}{\text{argmin}}
\newcommand{\diag}{\text{diag}}
\newtheorem{theorem}{Theorem}
\newcommand{\beginsupplement}{%
        \setcounter{table}{0}
        \renewcommand{\thetable}{S\arabic{table}}%
        \setcounter{figure}{0}
        \renewcommand{\thefigure}{S\arabic{figure}}%
     }
\theoremstyle{definition}
\newtheorem{example}{Example}[section]
\title{Generalized Meta-Analysis for Multiple Regression Models
Across Studies with Disparate Covariate Information}
\author[1]{Prosenjit Kundu}
\author[1]{Runlong Tang}
\author[1,2]{Nilanjan Chatterjee}
\affil[1]{Department of Biostatistics, Bloomberg School of Public Health, The Johns Hopkins University}
\affil[2]{Department of Oncology, School of Medicine, The Johns Hopkins University}
\begin{document}
\maketitle

\begin{abstract}
Meta-analysis, because of both logistical convenience and statistical efficiency,
is widely popular for synthesizing information on common parameters of interest across multiple studies.
We propose developing a generalized meta-analysis 
approach for combining information on multivariate regression parameters
across multiple different studies which have varying level of covariate information.
Using algebraic relationships between regression parameters in different dimensions,
we specify a set of moment equations for estimating parameters of a maximal model through information available
from sets of parameter estimates from a series of reduced models available from the different studies.
The specification of the equations requires a reference dataset to estimate the joint distribution of the covariates.
We propose to solve these equations using the generalized method of moments approach,
with the optimal weighting of the equations taking into account uncertainty associated with estimates of the parameters of the reduced models.
We describe extensions of the iterated reweighted least square algorithm
for fitting generalized linear regression models using the proposed framework. Based on the same moment equations, we also propose a diagnostic test for detecting violation of underlying model assumptions, such as those arising due to heterogeneity in the underlying study populations.
Methods are illustrated using extensive simulation studies and a real data example involving
the development of a breast cancer risk prediction model using disparate risk factor information from multiple studies.
\end{abstract}
{{\it Keywords}: Data Integration; Empirical Likelihood; Generalized Method of Moments; Meta-Analysis; Missing Data; Semiparametric Inference.}

\section{Introduction}
In a variety of domains of applications,
including observational epidemiologic studies,
clinical trials and modern genome-wide association studies,
meta-analysis is widely used to synthesize information on underlying common parameters of interest across multiple studies \citep{laird86, laird15, ioannidis, kavvoura}.
The popularity of meta-analysis stems from the fact
that it can be performed based only on estimates of model parameters and standard errors,
avoiding various logistical,
ethical and privacy concerns associated with accessing the individual level data that is required in pooled analysis.
Moreover, in many common settings, it can be shown that under reasonable assumptions, meta-analyzed estimates of model parameters are asymptotically as efficient as those from pooled analysis
\citep{olkin, mathew, zeng}. In fact, meta-analysis approaches are now being used in divide and conquer approaches to big data, even when individual level data are potentially available, because of the daunting computational task of model fitting with extremely large sample sizes \citep{michael, liu, chun}.

In this article, we study the problem of multivariate meta-analysis in the setting of parametric regression modeling of an outcome given a set of covariates.
In standard settings, if estimates of multivariate parameters for an underlying common regression model and associated covariances are available across all the studies, then meta-analysis can be performed by taking inverse-variance-covariance weighted average of the vector of regression coefficients \citep{van, ritz, jackson}.
In many applications, a typical problem is that different studies include different, but possibly overlapping, sets of covariates.
In a large consortium of epidemiologic studies,
for example, some key risk factors will be measured across all the studies.
Inevitably, however, there will be potentially important covariates
which are measured only in some, but not all the studies.
It is also possible that some covariates are measured at a more detailed level or with a finer instrument in some studies compared to others.
Disparate sets of covariates across studies render standard meta-analysis to be applicable
for the development of models only limited to a core set of variables that are measured in the same fashion across all the studies.

We propose a generalized meta-analysis (GENMETA) approach for building rich models using information on model parameters across studies with disparate covariate information.
GENMETA is built upon a fundamental mathematical relationship between parameters of two regression models in different dimensions
from our recent study \citep{chatterjee}.
In the current article, we utilize this mathematical relationship to develop a general framework for combining information on parameters of various models of different dimensions within the generalized method of moments framework \citep{hansen, imbens}.
We develop an iterated reweighted least square  algorithm allowing stable and speedy computation of estimates.
The proposed method requires access to a reference dataset for estimating of joint distribution of the covariates in a nonparametric fashion. We show how the reference dataset can be used to derive an optimal estimator and associated variance-covariances even when entire variance-covariance matrices for model parameter estimates may not be obtainable from individual studies.

\section{Models and Methods}
\subsection{Model Formulation}
\label{sec:Formulation}
Suppose we have parameter estimates $\hat{\theta}_{k}$
and associated estimates of their covariance matrices $S_{k}$
from $K$ independent studies
which have fitted  reduced regression models, for which the likelihood is of the form $g_{k}(Y|X_{A_{k}}; \theta_{k})$,
where $Y$ is a common underlying outcome of interest
but the vector of  covariates $X_{A_k}$ is potentially distinct across the studies.
Let $X$ be the set of covariates used across all studies
and we assume the true distribution of $Y$ given $X$ can be specified
by a maximal regression model $f(Y|X;\beta)$.
Our goal is to estimate and make inference about $\beta^{*}$,
the true value of $\beta$,
based on summary-level information, $(\hat{\theta}_{k}, S_{k})$ from the $K$ studies.

In the proposed setup, it is possible but not necessary,
that one or more of the studies have information on all covariates
to fit the maximal model by themselves.
Under certain study designs, such as the multi-phase designs \citep{breslow88, breslow97, whittemore, wild} and the partial questionnaire design \citep{carroll}, data could be partitioned into independent sets where the maximal model can be fitted on some sets and various reduced models can be fitted on others.
The maximal model $f(Y|X;\beta)$ and the reduced models $g_{k}(Y|X_{A_k};\theta_k)$ may have different parametric forms, such as logistic and probit models when $Y$ is a binary disease outcome.
This setup also allows incorporation of covariates which may be measured more accurately or in a more refined fashion
in some studies than others.
For example, different studies may include two types of measurements, namely, $Z_1$ and $Z_2$, for the same covariate, with $Z_2$ a more refined measurement. In this case the different reduced models may include $Z_1$ or $Z_2$, but we require that the reference dataset includes both $Z_1$ and $Z_2$. In the maximal model, we can enforce that Y is independent of $Z_1$ given $Z_2$ by setting the regression parameters associated with $Z_1$ to be zero.
If all of reduced models were the same,
i.e. all studies have the same covariate information,
we have $X_k =X$, $\theta_{k}=\beta$ and $g_{k}=f$ for each $k$,
and the common parameter of interest $\beta^{*}$ can be efficiently estimated
by the fixed-effect meta-analysis estimator
$\hat{\beta}_{\mbox{meta}} = \sum_{k=1}^K(\sum_{k=1}^K S_{k}^{-1})^{-1}S_{k}^{-1}\hat{\theta}_{k}$,
the variance of which, in turn, can be estimated by
$\hat{\Sigma}_{\mbox{meta}} = (\sum_{k=1}^K S_{k}^{-1})^{-1}$
\citep{van, ritz, jackson}.

\subsection{A Special Case Involving Linear Regression Model}
As readers may find it counter-intuitive to comprehend how it is possible to estimate parameters of the maximal model as no single study may have ascertained $Y$ and all components of $X$ simultaneously, following we give a linear model example to help develop insight into the problem.
Suppose, one is interested in developing a multiple linear regression model for $Y$ based on a set of covariates $X$ in the form
\[ Y= \alpha + \sum_{k=1}^{K}\beta_{k}X_k + \epsilon\]
where it is further assumed that $\epsilon \sim N(0,\sigma^2)$. Without loss of generality, we will assume that all the variables $Y$, $X_1,\ldots,X_K$ are standardized to have mean zero and variance one. Under this model, the population parameter $\beta=(\beta_1,\ldots,\beta_K)^{T}$ can be expressed as $\beta= {E(X^{T}X)}^{-1}E(X^{T}Y)=R^{-1}E(X^{T}Y)$, where $R$ is the population correlation matrix of $X$. Now, suppose we have no data available on $Y$ and mutivariate $X$ on the same sample, but we have estimates available for parameters ($\theta_k$, $k=1,...K$) for univariate linear regression models of the form
\[ Y = \theta_{k}X_{k} +\psi_k. \]

From above $\theta_{k}=E(X_kY)$ and thus $\hat\theta=(\hat\theta_1,\ldots,\hat\theta_K)$, provides an estimate of  the cross product terms, $E(X^{T}Y)$, which is required in estimating $\beta$. Further, if we have a reference dataset, which has information on multivariate $X$ but is not required to be linked to $Y$,  it can be use to  estimate $R$, as $\hat{R}$ say, and a consistent estimate of $\beta$ can be obtained simply as $\hat\beta = \hat{R}^{-1}\hat\theta$. Thus, this simple derivation shows that it is possible to estimate parameters of a multiple regression model using information on parameters of a series of univariate regression models and a reference dataset. In fact, this observation that information on univariate regression parameters (known as summary-level statistics) can be utilized to reconstruct estimates of parameters of multivariate regression model has revolutionized the field of statistical genetics. Recently, a large variety of methods have been developed for the inference on parameters underlying multivariate regression models utilizing widely available summary-level results from large GWAS and reference datasets to estimate linkage disequlibrium across genetic markers \citep{yangvisscher2012, bulik-sullivan_ld_2015, zhu_integration_2016, pasaniuc}. In the following, we show a more general statistical formulation of the problem that allows consideration of non-linear models and use of information from arbitrary types of reduced models as opposed to simply univariate models.

\subsection{Generalized Meta-Analysis}
\label{sec:Meta-GMM Estimation}
The key idea underlying the proposed generalized meta-analysis is
that we convert information on parameters from reduced models
into a set of equations that are informative about the parameters of the maximal model.
In the following, we will make the assumptions (i)  the same probability law for $(Y, X)$ holds for all the underlying populations (ii) $f(Y|X;\beta)$ is a correctly specified model for the conditional distribution of $(Y|X)$ and (iii) we have a reference dataset to estimate empirically the joint distribution of all the factors included in $X$. 

Here, we assume all the studies employ a random sampling design
and the same probability law for $(Y, X)$ holds for all the underlying populations.
Let $s_{k}(y|x_{A_k}; \theta_{k}) = \partial
\log g_{k}(y|x_{A_k}; \theta_{k})/\partial \theta_{k}$ be the score function of the $k$th reduced model and denote $u_{k}(x;\beta, \theta_{k}) = \int s_{k}(y|x_{A_k}; \theta_{k})f(y|x;\beta)dy$. Assume $\hat{\theta}_{k}$ is the maximum likelihood estimator from the $k$th study and denote $\theta_{k}^{*}$ as the asymptotic limit of $\hat{\theta}_k$.
Irrespective of whether the reduced models are correct or not, $E_{pr^*}\{s_{k}(Y|X_{A_k}; \theta_{k}^{*})\}=0$ holds,
where $pr^*$ denotes the true probability law.
Assuming the maximal model is correctly specified,
we can write $pr^*(Y,X_{A_k})=\int_{X\setminus X_{A_k}}f(Y|X;\beta^*)dF^{*}(X)$.
Hence, a general equation establishing relationship between $\beta^{*}$ and $\theta_{k}^{*}$ is in the form \citep{chatterjee}
\begin{equation}\label{equ:generic moment function}
    \int u_{k}(x;\beta^{*},\theta_{k}^{*})d F^{*}(x)=0.
\end{equation}

As we may not have individual level data from the studies,
the above equations cannot be directly evaluated.
Instead, we assume that we have a reference sample of size $n$,
independent of the study samples,
on which measurements on $X$ are available.
The reference sample need not to be linked
with the outcome $Y$ of interest
and its sample size can be fairly modest compared to the study sample sizes.

With $\hat{\theta}_{k}$ from the studies
and the reference sample $\{\bd{X}_{i}\}_{i=1}^{n}$,
we can set up the estimating equations
$ U_{n}(\beta,\hat{\theta})=(1/n)\sum_{i=1}^{n}U(X_i; \beta,\hat{\theta})=0$,
where
$U(x;\beta,\theta)=
(u_{1}^T(x;\bd{\beta}, \bd{\theta}_{1}),
\ldots,
u_{K}^T(x;\bd{\beta}, \bd{\theta}_{K}))^{T}$,
$\hat{\bd{\theta}}=(\hat{\bd{\theta}}_{1}^{T}, \ldots, \hat{\bd{\theta}}_{K}^{T})^{T}$
and ${\bd{\theta}}=({\bd{\theta}}_{1}^{T}, \ldots, {\bd{\theta}}_{K}^{T})^{T}$.
Denote the dimensions of $\bd{\theta}_{k}$ and $\bd{\beta}$ as $d_{k}$ and $p$, respectively.
Because the number of equations $d=\sum_{k=1}^{K}d_{k}$ can be larger than the number of unknown parameters $p$,
the estimating equations may not be solved exactly.
Following the technique of generalized method of moments ,
we propose the following generalized meta-analysis (GENMETA) estimator of $\bd{\beta}^{\star}$:
$
\hat{\bd{\beta}}
=
{\argmin}_{\bd{\beta}}Q_{\hat{C}}(\bd{\beta})$
where
$
Q_{\hat{C}}(\bd{\beta})
=
U_{n}(\bd{\beta}, \hat{\bd{\theta}})^{T}\hat{C}U_{n}(\bd{\beta}, \hat{\bd{\theta}})
$
and $\hat{C}$ is a positive semi-definite weighting matrix.
Following the well established theory of generalized method of moments \citep{hansen,engle},
we derive the asymptotic properties of the GENMETA estimator.
Assume the study summary statistics $\hat{\bd{\theta}}_{k}$ are independent across studies;
${n_{k}}^{1/2}(\hat{\bd{\theta}}_{k}- \bd{\theta}^{*}_{k})
{\rightarrow}N(0, \bd{\Sigma}_{k})$ in distribution
and
${\lim}_{n\rightarrow\infty} n_{k}/n = c_{k}>0$ for each $k$;
and the reference sample is independent of the study samples.
Denote
$\bd{\Gamma}=
E\{{\partial}
U(X;\bd{\beta}, \bd{\theta}^{\star})/{\partial \bd{\beta}}\mid_{\bd{\beta}=\bd{\beta}^{*}}\}$,
$\bd{\Delta}
=
E\{U(X;\bd{\beta}^{*},\bd{\theta}^{\star})
U^{^T}(X;\bd{\beta}^{*},\bd{\theta}^{\star})\}$
and
$\bd{\Lambda} =
\diag(\bd{\Lambda}_1, \ldots, \bd{\Lambda}_K)$,
where
$\bd{\Lambda}_{k}=(1/c_{k})W_{k}\bd{\Sigma}_{k}W_{k}^{T}$
and
$W_{k} = E\{{\partial}
u_{k}(X; \bd{\beta}^{*}, \bd{\theta}_{k})/{\partial \bd{\theta}_{k}}\}
|_{\bd{\theta}_{k} = \bd{\theta}_{k}^{*}}$
for $k=1,\dots,K$.

\begin{theorem}[Consistency and Asymptotic Normality of $\hat{\bd{\beta}}$]
\label{theorem:asymptotic normality of beta hat}
Suppose the positive semi-definite weighting matrix
$\hat{C} {\rightarrow} C$ in probability. Then, under Assumptions (A1)-(A4) in the appendix, $\hat{\bd{\beta}} {\rightarrow} \bd{\beta}^{*}$ in probability.
Further, given $\beta^{*}$ is an interior point and under additional Assumptions (A5)-(A9) in the appendix,
${n}^{1/2}(\hat{\bd{\beta}}-\bd{\beta}^{*})$
converges in distribution to the normal distribution
$N(0, (\bd{\Gamma}^{T}C\bd{\Gamma})^{-1}\bd{\Gamma}^{T}C
(\bd{\Delta}+\bd{\Lambda})C\bd{\Gamma}(\bd{\Gamma}^{T}C\bd{\Gamma})^{-1})
$.
\end{theorem}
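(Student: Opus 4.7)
The plan is to treat this as a two-step extremum estimator: the first stage produces $\hat{\bd{\theta}}$ from the $K$ independent studies, and the second stage minimizes the GMM criterion $\vec{Q}_{\hat{\vec{C}}}(\bd{\beta})$ over the reference sample. The overall skeleton follows the standard GMM template of \citet{hansen}, but the observation specific to our setup is that the randomness in $\hat{\bd{\theta}}$ and in the reference-sample average are independent, so the two sources of variation add cleanly inside the sandwich.

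For consistency, set $\bd{m}(\bd{\beta}) = \E\,\vec{U}(\vec{X};\bd{\beta},\bd{\theta}^{\star})$, which vanishes at $\bd{\beta}^{*}$ by equation~\eqref{equ:generic moment function}. I would expect (A1)--(A4) to supply compactness of the parameter space, continuity and an integrable envelope for $\vec{U}$, consistency of $\hat{\bd{\theta}}$, and identification of $\bd{\beta}^{*}$ as the unique minimizer of $\bd{m}(\cdot)^{T}\vec{C}\,\bd{m}(\cdot)$. A uniform law of large numbers then gives $\vec{U}_{n}(\bd{\beta},\bd{\theta}^{\star}) \overset{P}{\rightarrow} \bd{m}(\bd{\beta})$ uniformly in $\bd{\beta}$, while a stochastic equicontinuity argument in $\bd{\theta}$ combined with $\hat{\bd{\theta}} \overset{P}{\rightarrow} \bd{\theta}^{\star}$ yields $\sup_{\bd{\beta}}\|\vec{U}_{n}(\bd{\beta},\hat{\bd{\theta}}) - \vec{U}_{n}(\bd{\beta},\bd{\theta}^{\star})\| \overset{P}{\rightarrow} 0$. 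Together with $\hat{\vec{C}} \overset{P}{\rightarrow} \vec{C}$, the criterion $\vec{Q}_{\hat{\vec{C}}}$ converges uniformly to $\bd{m}(\cdot)^{T}\vec{C}\,\bd{m}(\cdot)$, and the usual argmin continuous-mapping argument gives $\hat{\bd{\beta}} \overset{P}{\rightarrow} \bd{\beta}^{*}$.

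For asymptotic normality, with $\bd{\beta}^{*}$ interior I would start from the first-order condition $\hat{\bd{\Gamma}}_{n}^{T}\hat{\vec{C}}\,\vec{U}_{n}(\hat{\bd{\beta}},\hat{\bd{\theta}})=0$ and Taylor-expand $\vec{U}_{n}$ jointly in $(\bd{\beta},\bd{\theta})$ around $(\bd{\beta}^{*},\bd{\theta}^{\star})$. Under the smoothness and dominating-envelope conditions in (A5)--(A9) the remainders are $o_{P}(n^{-1/2})$, and one obtains
$$\sqrt{n}(\hat{\bd{\beta}}-\bd{\beta}^{*}) = -(\bd{\Gamma}^{T}\vec{C}\bd{\Gamma})^{-1}\bd{\Gamma}^{T}\vec{C}\left\{\sqrt{n}\,\vec{U}_{n}(\bd{\beta}^{*},\bd{\theta}^{\star}) + \vec{W}\sqrt{n}(\hat{\bd{\theta}}-\bd{\theta}^{\star})\right\} + o_{P}(1),$$
where $\vec{W}=\diag(\vec{W}_{1},\ldots,\vec{W}_{K})$ is block-diagonal because $\vec{u}_{k}$ depends on $\bd{\theta}$ only through $\bd{\theta}_{k}$. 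A CLT on the i.i.d.\ reference sample gives $\sqrt{n}\,\vec{U}_{n}(\bd{\beta}^{*},\bd{\theta}^{\star}) \overset{\mathcal{D}}{\rightarrow} N(0,\bd{\Delta})$. For each study block, $\sqrt{n/n_{k}}\rightarrow 1/\sqrt{c_{k}}$ together with $\sqrt{n_{k}}(\hat{\bd{\theta}}_{k}-\bd{\theta}_{k}^{*}) \overset{\mathcal{D}}{\rightarrow} N(0,\bd{\Sigma}_{k})$ gives $\sqrt{n}\,\vec{W}_{k}(\hat{\bd{\theta}}_{k}-\bd{\theta}_{k}^{*}) \overset{\mathcal{D}}{\rightarrow} N(0,\bd{\Lambda}_{k})$, and independence across studies stacks these into $N(0,\bd{\Lambda})$. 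Crucially, because the reference sample is independent of every study sample, the two asymptotic normals are independent and their covariances sum to $\bd{\Delta}+\bd{\Lambda}$; Slutsky and $\hat{\vec{C}}\overset{P}{\rightarrow}\vec{C}$ then deliver the stated sandwich formula.

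The main obstacle is the joint Taylor-expansion step, and specifically justifying $\vec{U}_{n}(\bd{\beta}^{*},\hat{\bd{\theta}}) - \vec{U}_{n}(\bd{\beta}^{*},\bd{\theta}^{\star}) = \vec{W}(\hat{\bd{\theta}}-\bd{\theta}^{\star}) + o_{P}(n^{-1/2})$. This requires uniform convergence of the sample Jacobian in $\bd{\theta}$ over a neighborhood of $\bd{\theta}^{\star}$, i.e., a genuine stochastic-equicontinuity condition, which is what (A5)--(A9) should be designed to supply via local Lipschitz / dominating-integrable-derivative hypotheses on $\vec{U}$ near $(\bd{\beta}^{*},\bd{\theta}^{\star})$.
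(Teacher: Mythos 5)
Your proposal is correct and follows essentially the same route as the paper: the paper also establishes consistency via uniform convergence of the criterion to $\vec{U}_{0}(\bd{\beta},\bd{\theta}^{*})^{T}\vec{C}\vec{U}_{0}(\bd{\beta},\bd{\theta}^{*})$ and Newey--McFadden, and obtains normality from the first-order condition with two mean-value expansions (first in $\bd{\beta}$, then in $\bd{\theta}$) that yield exactly your linearization $\sqrt{n}(\hat{\bd{\beta}}-\bd{\beta}^{*})=-(\bd{\Gamma}^{T}\vec{C}\bd{\Gamma})^{-1}\bd{\Gamma}^{T}\vec{C}\{\sqrt{n}\vec{U}_{n}(\bd{\beta}^{*},\bd{\theta}^{*})+\vec{V}_{n}(\bd{\beta}^{*},\bar{\bd{\theta}})\sqrt{n}(\hat{\bd{\theta}}-\bd{\theta}^{*})\}$, with independence of the reference and study samples giving $\bd{\Delta}+\bd{\Lambda}$. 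The ``obstacle'' you flag is resolved in the paper exactly as you anticipate: the mean-value form is exact, and (A7)--(A8) give uniform convergence of the $\bd{\theta}$-Jacobian to $\diag(\vec{W}_{1},\ldots,\vec{W}_{K})$, so the remainder is $o_{P}(n^{-1/2})$ since $\hat{\bd{\theta}}-\bd{\theta}^{*}=O_{P}(n^{-1/2})$.
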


The optimal $C$ that minimizes the above asymptotic covariance matrix is $C_{\text{opt}}=(\bd{\Delta}+\bd{\Lambda})^{-1}$
and the corresponding optimal asymptotic covariance matrix is $\{\bd{\Gamma}^{T}(\bd{\Delta}+\bd{\Lambda})^{-1}\bd{\Gamma}\}^{-1} $.
Because $C_{\text{opt}}$ itself depends on unknown
underlying parameters, it requires iterative evaluation.
In our applications,
we first evaluate an initial GENMETA estimator
with a simple choice of $\hat{C}$ such as the identity matrix.
We then obtain the iterated GENMETA estimator
by continuing to set $\hat{C}=\hat{C}_{\text{opt}}$ based on the latest parameter estimate till convergence.
By Theorem \ref{theorem:asymptotic normality of beta hat},
$\hat{\bd{\beta}}$ with $C_{\text{opt}}$ approximately follows
a Gaussian distribution with mean $\bd{\beta}^{*}$
and covariance matrix
\begin{equation}\label{equ:covariance matrix}
    [\bd{\Gamma}^{T}\{\frac{1}{n}\bd{\Delta}
+ \diag(
\frac{1}{n_{1}}W_{1}\bd{\Sigma}_{1}W_{1}^{T},
\ldots,
\frac{1}{n_{K}}W_{K}\bd{\Sigma}_{K}W_{K}^{T})^{-1}\}\bd{\Gamma}]^{-1},
\end{equation}
which indicates that the precision of GENMETA depends on the size of the reference sample, $n$,
as well as on those of the studies, $n_{k}$.
However, as we will see in Section 3, the study sample sizes are the  dominating  factor controlling the precision of GENMETA
and with fixed $n_{k}$'s, the precision of GENMETA quickly reaches plateau as a function of $n$.

For the implementation of the optimal GENMETA and the variance estimation of any of the GENMETA estimators,
one needs to have valid estimates of $\Lambda_{k}$,
which depend on $\Sigma_k$,
the asymptotic covariance matrices of the estimates of the reduced model parameters.
Ideally,  the studies should provide robust estimates of the covariance matrices,
such as the sandwich covariance estimators,
so that they are valid irrespective of
whether the underlying reduced models are correctly specified or not.
In practice, however, while some kind of estimates of standard errors of the individual parameters are expected to be available from a study,
obtaining the desired robust estimate of the entire covariance matrix could be difficult.
When no estimate of $\bd{\Sigma}_k$ is available from the $k$th study,
one can take the advantage of the reference sample to estimate it by
$\hat{\bd{\Sigma}}_{k}^{\text{ref}}=\hat{J}^{-1}\hat{V}\hat{J}^{-1}$,
where
$\hat{J}=P_n [E_{Y\mid\bd{X}}\{\nabla_{\bd{\theta}_{k}}s_{k}(\theta_{k})\}]|_{\theta_k = \hat{\theta}_k}$,
$\hat{V}=P_n [E_{Y\mid\bd{X}} \{s_{k}(\theta_{k})s_{k}(\theta_{k})^{T}\}]|_{\theta_k = \hat{\theta}_k}$,
$s_{k}(\hat{\theta}_{k})=s_{k}(Y\mid X_{A_k}; \theta_{k})|_{\theta_k = \hat{\theta}_k}$,
$\hat{\theta}_{k}$ is a consistent estimator of $\bd{\theta}_{k}^{\star}$,
$\hat{E}_{Y\mid\bd{X}}$ is the expectation with respect to the distribution of $Y\mid\bd{X}$
with $\bd{\beta}^{\star}$ replaced by a consistent estimator $\hat{\beta}$,
and $P_n$ is the empirical measure with respect to the reference sample.
Further, assuming
$E_{Y\mid\bd{X}}\{\nabla_{\bd{\theta}_{k}}s_{k}(\bd{\theta}_{k})\}|_{\theta_k = \theta^*_k}
=
\nabla_{\bd{\theta}_{k}}E_{Y\mid\bd{X}}\{s_{k}(\bd{\theta}_{k})\}|_{\theta_k = \theta^*_k}$,
it follows
$\bd{\Lambda}_{k}=(1/c_{k})E_{(Y,\bd{X})} \{s_{k}(\bd{\theta}_{k})s_{k}(\bd{\theta}_{k})^{T}\}|_{\theta_k = \theta^*_k}$,
which can be estimated by
$\hat{\bd{\Lambda}}_{k}^{\text{ref}}=
(1/c_{k})P_n [E_{Y\mid\bd{X}} \{s_{k}(\theta_{k})s_{k}(\theta_{k})^{T}\}]|_{\theta_k = \hat{\theta}_k}$.
For example,
suppose $Y\mid\bd{X}$ and $Y\mid\bd{X}_{A_{k}}$ follow logistic distributions
with parameters $\bd{\beta}^{\star}$ and $\bd{\theta}_{k}$, respectively.
Denote $\bd{X}=(1, \bd{X}^{T})^{T}$ and $\bd{X}_{A_{k}}=(1, \bd{X}_{A_k}^{T})^{T}$.
Then,
\begin{equation}\label{equ:Lambda.k.hat.ref}
\hat{\bd{\Lambda}}_{k}^{\text{ref}}
=
\frac{1}{c_{k}}
P_n
[
\{
(1+e^{\bd{X}_{A_{k}}^{T}\hat{\bd{\theta}}_{k}})^{-2}
(1+e^{-\bd{X}^{T}\bd{\beta}})^{-1}
+
(1+e^{-\bd{X}_{A_{k}}^{T}\hat{\bd{\theta}}_{k}})^{-2}
(1+e^{\bd{X}^{T}\hat{\bd{\beta}}})^{-1}\}
\bd{X}_{A_{k}}\bd{X}_{A_{k}}^{T}]
.
\end{equation}
In section 3, we will study the properties of the GENMETA estimators
using either covariance matrices estimated from studies or the reference sample.

It is insightful to explore the connection between GENMETA and standard meta-analysis
when all of the reduced models are identical to the maximal model,
that is, when
$\bd{\theta}^*_{k} = \bd{\beta}^*$,
$X_{A_k} = X$
and $g_{k}=f$ for each $k$.
Under this setup, the moment vector evaluated at the true parameters becomes zero for each study,
i.e. $u_{k}(\bd{X}; \bd{\beta}^*,\bd{\theta}^*_{k})=u_{k}(\bd{X}; \bd{\beta}^*,\bd{\beta}^*)
= 0$.
This simplification implies $\bd{\Delta} = \vec{0}$
and thus the optimal weighting matrix is
$C_{\text{opt}} = \bd{\Lambda}^{-1}
= \diag(c_1\bd{\Sigma},\dots,c_K\bd{\Sigma})$,
where $\bd{\Sigma}$ is the inverse of the Fisher's information matrix of $f$.
Denote by $\hat{\bd{\beta}}_{\text{opt}}$ the GENMETA estimator with 
a consistent estimator of $C_{\text{opt}}$.
Then, by arguments similar to those in the proof of Theorem \ref{theorem:asymptotic normality of beta hat},
$\hat{\bd{\beta}}_{\text{opt}}$ can be expressed as
\begin{equation*}\label{eqn:asyequv}
\hat{\bd{\beta}}_{\text{opt}}
=\hat{\bd{\beta}}_{\text{meta}}+ o_p(1/{n}^{1/2}),
\end{equation*}
which implies that $\hat{\bd{\beta}}_{\text{opt}}$ and $\hat{\bd{\beta}}_{\text{meta}}$
are asymptotically equivalent in terms of limiting distributions.

\subsection{Generalized Linear Model and Iterated Reweighted Least Square Algorithm}
GENMETA computation involves minimization of a quadratic form,
$Q_{{C}}(\bd{\beta})=U_n^T(\bd{\beta},\hat{\bd{\theta}})CU_n(\bd{\beta},\hat{\bd{\theta}})$,
with a known weighting matrix $C$.
Next, we derive the iterated reweighted least squares algorithm
for minimizing the quadratic form,
assuming that the maximal and reduced models belong to the class of generalized linear models [\citep{mccullagh}].
Specifically,
the densities of $Y\mid\bd{X}$ and $Y\mid\bd{X}_{A_{k}}$ are of the forms
$\exp(\{1/a(\phi)\}(yh(\bd{x}^{T}\bd{\beta}^{\star}) - b \{h(\bd{x}^{T}\bd{\beta}^{\star})\}) + c(y;\phi))$
and
$\exp(\{1/a(\phi_{k})\}(yh(\bd{x}_{A_{k}}^{T}\bd{\theta}_{k}) - b \{h(\bd{x}_{A_{k}}^{T}\bd{\theta}_{k}\}) + c(y;\phi_{k}))$,
respectively,
where $a(\cdot)$, $b(\cdot)$ and $c(\cdot)$ are known functions,
$h(\cdot)=b'^{-1}\{g^{-1}(\cdot)\}$,
$g$ is a monotone and differentiable link function,
and $\phi$ and $\phi_{k}$ are the dispersion parameters
of the maximal and the $k$th reduced models, respectively.

First, we assume that the dispersion parameters, $\phi$ and $\phi_{k}$'s, are known
and later we will relax this assumption.
For this case,
it follows, for each $k$,
\begin{equation}\label{eqn: IRLS_score}
u_{k}(x;\bd{\beta}, \bd{\theta}_{k})
= r_{k}(x;\bd{\beta},\bd{\theta}_{k}, \phi_{k})x_{A_k},
\end{equation}
where
$r_{k}(x;\bd{\beta},\bd{\theta}_{k}, \phi_{k})
=
\{1/a(\phi_k)\}(g^{-1}(x^T\bd \beta) - g^{-1}(x^T_{A_k}\bd \theta_{k}))
h'(x^T_{A_k}\bd \theta_{k})$.
Then,
the empirical moment vector is
$U_n(\bd\beta, \hat{\bd{\theta}}) =
P_n
(
u_{1}(X;\bd{\beta}, \hat{\bd{\theta}}_{1})^{T}, \ldots,
\vec{u}_{K}(X;\bd{\beta}, \hat{\bd{\theta}}_{K})^{T}
)^T$.
The Newton-Raphson (NR) method for searching the minimizer of $Q_{{C}}(\bd{\beta})$ can be written as
\begin{equation} \label{eqn:IRLS iter}
\bd \beta^{(t+1)}  = \bd \beta^{(t)} - (X_{\text{rbind}}^TW^*X_{\text{rbind}})^{-1}X^T_{\text{rbind}}WX_{A_{\text{diag}}}CX^T_{A_{\text{diag}}}r
\end{equation}
where $X_{\text{rbind}} = \bd{1} \otimes X$ and $X_{(n\times p)}$ is the reference data matrix;
$X_{A_{\text{diag}}} = \text{diag}(X_{A_1}, \hdots, X_{A_K})$
and $X_{A_k (n\times d_{k})}$ is the reference data matrix for the $k$th study;
$W = \diag(\bd{W}_{1}, \ldots, \bd{W}_{K})$,
$\bd{W}_{k} = \text{diag}(w_{k1},\dots,w_{kn})$
and $w_{ki}
=
(1/(a(\phi_k)g^\prime \{ g^{-1}(X^T_i\bd{\beta}^{(t)})))h'(X^T_{A_k,i}\hat{\bd{\theta}}_{k})$ for $k = 1,\dots,K$; $i= 1,\dots, n$;
$W^*$ is the sum of $W X_{A_{diag}}C X^T_{A_{diag}}W$ and $ \text{diag}(r^TX_{A_{diag}}CX^T_{A_{diag}}L)$,
a diagonalized matrix from a vector;
$\bd{r}=(\bd{r}_{1}^{T},\ldots, \bd{r}_{K}^{T})^{T}$,
$\bd{r}_{k}=(r_{k1}, \ldots, r_{kn})^{T}$
and $r_{ki}=r_{k}(\vec{X}_{i};\bd{\beta}^{(t)},\hat{\bd{\theta}}_{k}, \phi_{k})$;
and
$\vec{L} = \diag(\bd{L}_{1},\ldots, \bd{L}_{K})$,
$\bd{L}_{k} = \diag(l_{k1},\ldots,l_{kn})$
and
$l_{ki}
=
-g'' \{ g^{-1}(\bd{X}_{i}^{T}\bd{\beta}^{(t)})\}/(a(\phi_{k})[g' \{ g^{-1}(\bd{X}_{i}^{T}\bd{\beta}^{(t)})\}]^{3}h'(X^T_{A_k,i}\hat{\bd{\theta}}_{k}))$.
Equation (\ref{eqn:IRLS iter}) implies that the Newton-Raphson's method
is an iterated reweighted least squares algorithm.

When $\phi$ and $\phi_{k}$'s are unknown,
we propose to first obtain the GENMETA estimator $\hat{\bd{\beta}}$ of $\bd{\beta}^{\star}$ as above
with $\phi_{k}'s$ replaced by $\hat{\phi}_{k}$'s.
Next, we consider the estimation of $\phi^{\star}$, the true value of $\phi$.
For the $k$th reduced model,
we have an additional score function with respect to $\phi_{k}$,
from which,
similar to equation (\ref{eqn: IRLS_score}),
we can obtain
\begin{equation*}
u_{k}(X;\bd{\beta}, \phi, \bd{\theta}_{k}, \phi_{k})=
-\frac{a^\prime(\phi_k)}{a^2(\phi_k)}
(g^{-1}(X^T\bd{\beta})h(\bd{X}_{A_{k}}^{T}\bd{\theta}_{k}) - b \{ h(\bd{X}_{A_{k}}^{T}\bd{\theta}_{k})\})
+ q_{k}(X; \bd{\beta}, \phi, \phi_k),
\end{equation*}
where $q_{k} = E_{Y\mid\bd{X}}\{c'(Y;\phi_k)\}$ and $c'(Y;\phi_k)$ is the derivative of $c(Y;\phi_k)$ with respect to $\phi_{k}$.
Then,
the empirical moment vector for $\phi$ is
$U_n(\phi) =
P_n
(
u_{1}(X; \hat{\bd{\beta}}, \phi, \hat{\bd{\theta}}_{1}, \hat{\phi}_{1})^{T},
\ldots,
u_{K}(X; \hat{\bd{\beta}}, \phi, \hat{\bd{\theta}}_{K}, \hat{\phi}_{K})^{T}
)^{T}$.
To estimate $\phi^{\star}$,
we need to compute the minimizer of $U_n(\phi)^{T}\bd{C} U_n(\phi)$,
where $C$ is a known weighting matrix.
The Newton-Raphson steps can be written as
\begin{equation} \label{eqn:IRLS iter dip}
\phi^{(t+1)}
= \phi^{(t)} - J_n^{-1}(\phi^{(t)})
D_{n}(\phi^{(t)}),
\end{equation}
where
$J_n(\phi) = U^T_n(\phi)C{d^2 } q_n(\phi)/{d\phi^2}
+ ({d}q_n(\phi)/{d\phi})^TC{d}q_n(\phi)/{d\phi}$,
$D_{n}(\phi)=U^T_n(\phi^{(t)})C{d}q_n(\phi)/{d\phi}$
and $q_n(\phi) =
P_n
(
q_{1}(X; \hat{\bd{\beta}}, \phi, \hat{\phi}_1), \ldots, q_{K}(X; \hat{\bd{\beta}}, \phi, \hat{\phi}_K)
)^T$.
In brief, when $\phi$ and $\phi_{k}$, $k=1,\dots,K$, are unknown,
we first choose initial estimates $\bd \beta^{(0)}$ and $\phi^{(0)}$.
Then, we get the GENMETA estimator $\hat{\bd \beta}$ by using equation (\ref{eqn:IRLS iter}) until a stopping rule is reached.
Subsequently, $\phi^{(0)}$, $\hat{\bd \beta}$ and the study estimates are plugged in equation (\ref{eqn:IRLS iter dip})
and the process is repeated until a stopping rule is reached to get the GENMETA estimator of $\phi^{*}$.
In each NR step, the weighting matrix $\bd{C}$ is estimated by the estimates from the previous step.

\subsection{Diagnostic Test for Model Violation}
GENMETA relies on several modeling assumptions, including homogeneity of the underlying populations with respect to the distribution of covariates and regression parameters, and correct specification of the maximal model. In the absence of individual level data from the different studies, these assumptions could not be tested in the usual manner using traditional diagnostic tests. However, even with summary-level data, some diagnostic testing is possible. In particular, from an intuitive perspective, departure of the GENMETA estimating equations, when evaluated at estimated parameter values, from their expected null value will be indicative of disagreement between the model and the observed data, i.e. the estimates of the parameters from the reduced models from different studies. For example, if the regression parameters underlying the maximal model are highly heterogeneous across studies, then the assumption of a common $\beta$ in GENMETA will not be able to explain the heterogeneity that is expected to be present in overlapping reduced model parameters across the studies. 
Specifically, we propose to use the score test based on the statistic, $T_{GENMETA}=nQ_{\hat{C}_{opt}}(\hat{\beta})$, where $\hat{\beta}$ is the GENMETA estimate. When all the underlying assumptions are correct, from the standard generalized method of moments theory, $T_{GENMETA}$ converges in distribution to a $\chi^2$ distribution with $d-p$ degrees of freedom, where $d$ is the total number of GENMETA equations and $p$ is the total number of underlying parameters that are being estimated. The test is only applicable when $d > p$, which arises when different studies have overlapping covariates. 

\section{Simulations}
We study the performance of the GENMETA estimators through simulation studies in both idealized and non-idealized settings. In all simulations, we assume that the relationship between a binary outcome variable $Y$ and three covariates $(X_1,X_2,X_3)$
can be described by a logistic regression model of the form
\begin{equation}\label{Equation:Simu}
    Y\mid(X_{1}, X_{2}, X_{3}) \sim \text{Bernoulli}([1+\exp\{-(\beta_{0}^{*}+\beta_{1}^{*}X_{1}+
\beta_{2}^{*}X_{2}+\beta_{3}^{*}X_{3})\}]^{-1})
\end{equation}
where $(X_1,X_2,X_3)$ follows a multivariate normal distribution with mean $\mu = (\mu_1, \mu_2, \mu_3)$, variance $\sigma^2 = (\sigma_1^2, \sigma_2^2, \sigma_3^2)$ and underlying correlations
$\rho=(\rho_{12},\rho_{13},\rho_{23})$.
We chose
$\beta_1^*=\beta_2^*=\beta_3^*=\log 1.3$ to reflect a moderate degree of association of the outcome with each covariate after adjusting for the others.
We assume existence of three separate studies,
where each study fits a reduced logistic model for the outcome $Y$ on two of the covariates in the form
\begin{equation}\label{Equation:Simureduced}
 Y\mid(X_{i}, X_{j}) \sim \text{Bernoulli}((1+\exp\{-(\theta_{0,ij}^{*}+\theta_{i,ij}^{*}X_{i}+\theta_{j,ij}^{*}X_{j},
)\})^{-1}),
\end{equation}
with  $X_1$ and $X_2$ included in Study-I, $X_2$ and $X_3$ in Study-II and $X_1$ and $X_3$ in Study-III. Here, as data for each study are generated using the maximal model, the reduced models are by definition incompatible due to non-collapsibility of the logistic model. We fix the sample size of the studies at $n_1=300$, $n_2=500$ and $n_3=1000$ and vary the sample size of the reference dataset.

\subsection{Homogeneous Population}
We assume that the studies are conducted in the same underlying population from which the reference sample is drawn.
Under this setting, there exists a common mean vector $\mu_b = (0,0,0)$, common variance vector $\sigma_b^2 = (1,1,1)$ and common correlation vector $\rho_{b}=0.3,0.6,0.1)$,
that describes the joint distribution of the three covariates across all the underlying populations.
In the first set of simulation,
we assume a fixed sample size $n=50$ for the reference dataset. 
In all settings, we simulate data $(Y,X_1,X_2,X_3)$ for the underlying studies based on the data generating models as described above and
fit the respective reduced models to obtain estimates of the reduced model parameters.
For each set of simulated data,
we obtain estimates of covariance matrices of the reduced model parameters
using robust sandwich estimators based on either the study datasets themselves,
or the reference dataset (see (\ref{equ:Lambda.k.hat.ref})).
We consider three GENMETA estimators:
GENMETA.0, which is the initial GENMETA estimator with identity weighting matrix
and GENMETA.1 and GENMETA.2,
that use covariance estimates from the reference dataset and the studies, respectively.

From the results shown in Table \ref{table:logistic:bias se rmse cr al},
we observe that all three GENMETA estimators are nearly unbiased. The standard error estimates,
irrespective of whether $\Sigma_k,k=1,2,3$ were estimated using the study data sets or the reference sample,
accurately reflected the true standard errors of the GENMETA parameter estimates across different simulations.
As a result, the 95\% confidence intervals maintained the coverage probability at the nominal level.
Among the three GENMETA estimators considered,
clearly GENMETA.0, which use the non-optimal choice of $C=I$, is less efficient than GENMETA.1 and GENMETA.2,
which, between themselves, had comparable efficiency.

\begin{table}
\centering
 \begin{threeparttable}
\caption{Results on the GENMETA estimators}{
  \begin{tabular}{clrlcll}
 $n=50$   & $\beta_{i}^{*}$ & \multicolumn{1}{c}{Bias} & \multicolumn{1}{c}{SD ($\text{ESD}_1$, $\text{ESD}_2$)} & \multicolumn{1}{c}{RMSE} &
    \multicolumn{1}{c}{CR} & \multicolumn{1}{c}{AL} \\  
\vspace{-3mm} &&&&&& \\
       & $\beta_{1}^{*}$ &  .010  & .161 (.161, .162) & .161 & .968, .964 & .642, .636 \\
  GENMETA.0 & $\beta_{2}^{*}$ &  .005  & .110 (.111, .108) & .110 & .958, .960 & .434, .423 \\
       & $\beta_{3}^{*}$ & -.001  & .138 (.143, .142) & .138 & .963, .964 & .559, .556 \\ 
\vspace{-3mm} &&&&&& \\
       & $\beta_{1}^{*}$ &  .005  & .117 (.116, .110) & .117 & .976, .966 & .455, .433 \\
  GENMETA.1 & $\beta_{2}^{*}$ & -.003  & .101 (.105, .099) & .101 & .964, .955 & .411, .386 \\
       & $\beta_{3}^{*}$ &  .001  & .099 (.102, .097) & .099 & .973, .961 & .402, .381 \\ 
\vspace{-3mm} &&&&&& \\
       & $\beta_{1}^{*}$ &  .007  & .115 (.116, .111) & .115 & .971, .964 & .455, .435 \\
  GENMETA.2 & $\beta_{2}^{*}$ & -.003  & .102 (.105, .099) & .102 & .960, .959 & .413, .388 \\
       & $\beta_{3}^{*}$ &  .003  & .098 (.103, .098) & .098 & .957, .957 & .403, .383 \\ \hline
 \end{tabular}}
 \begin{tablenotes}[para,flushleft]
      \small
      \item Biases,
standard deviation (SD),
estimated standard deviation (ESD),
square roots of mean square errors (RMSE),
coverage rates (CR)
and average lengths (AL)
of 95\% confidence intervals for GENMETA.0 (the initial GENMETA estimator with identity weighting matrix),
GENMETA.1 and GENMETA.2 (the iterated GENMETA estimators without and with using the study covariance estimators) in the logistic regression setting.
Standard deviations were estimated either using the reference sample ($\text{ESD}_1$)
or using the covariance estimates of reduced model parameters from the studies ($\text{ESD}_2$).
Estimated standard deviations are reported by taking averages over simulated datasets.
Both estimated SE's are used to construct 95\% confidence intervals and their CR's and AL's are reported.
    \end{tablenotes}
    \end{threeparttable}
\label{table:logistic:bias se rmse cr al}
\end{table}

In the same setting as above,
when we vary $n$ from 10 up to the maximum of 1000 (Figure \ref{figure:rmse.GM.0 GM.1 GM.2: beta 1}),
we observe that the precision of the GENMETA estimates do not increase with $n$ once it reaches a threshold around 100, which is one third of the minimum of the study sample sizes($n_1=300$).
These thresholds were even smaller for estimation of coefficients associated with $X_2$,
which had weak to moderate correlation with the other covariates in the model.
The fact that the reference dataset can be substantially smaller than the study datasets without having much impact on the precision of the GENMETA estimator is encouraging given that accessing reference dataset of large sample size may be difficult in practice.

\begin{figure}
  \centering
  \includegraphics[scale=0.57]{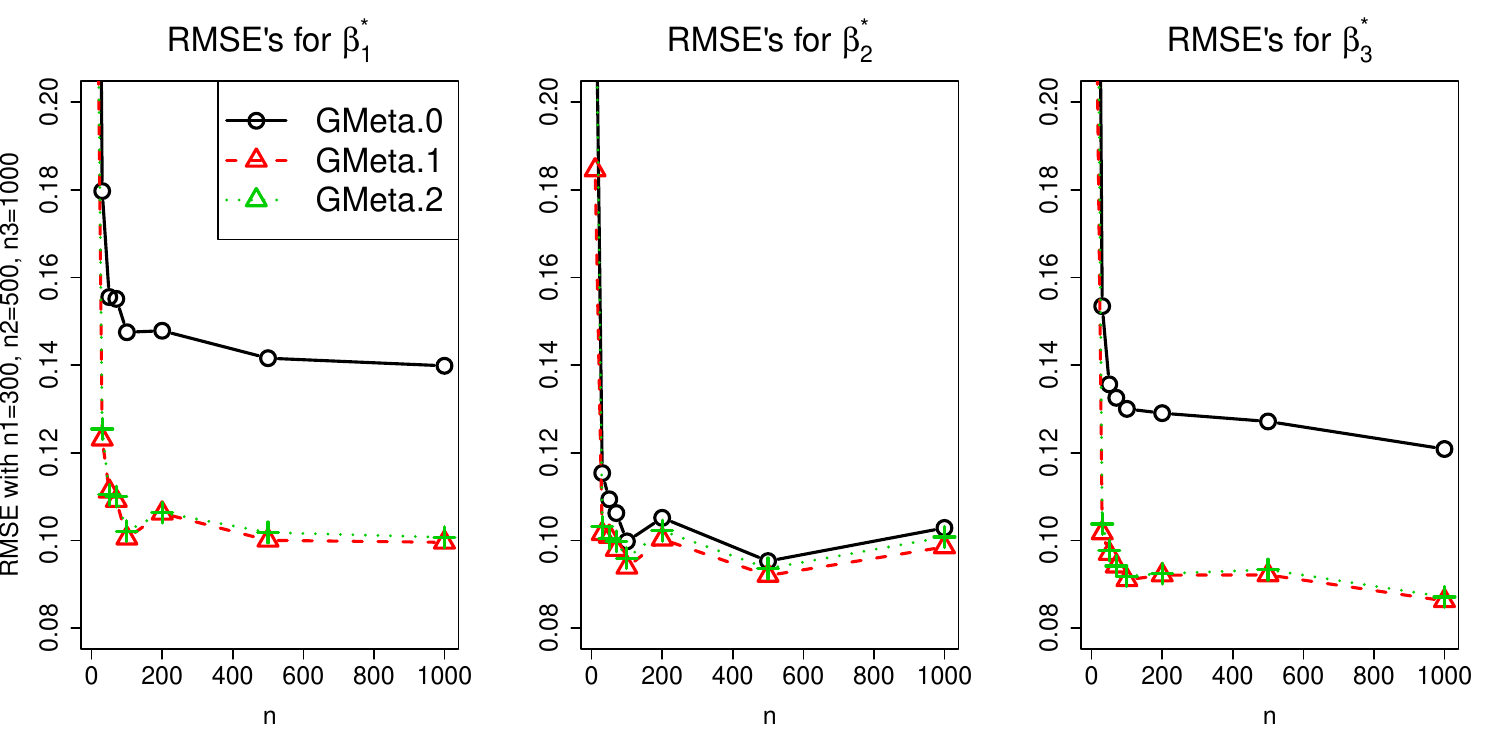}
  \caption{
  Square roots of mean square errors (RMSE) of GENMETA estimators
  for $\beta_{1}^{*}$, $\beta_{2}^{*}$ and $\beta_{3}^{*}$
  with fixed study sample sizes $n_1=300$, $n_2=500$ and $n_3=1000$
  and varying reference sample size $n$ from 10, 30, 50, 70, 100, 200 to 1000.
  The circle and solid line are for the RMSE's of GENMETA.0;
  the triangle and dashed line are for those of GENMETA.1;
  the plus and dotted line are for those of GENMETA.2.}
  \label{figure:rmse.GM.0 GM.1 GM.2: beta 1}
\end{figure}

Finally, we conduct additional simulation studies to obtain more insight into results from the real data analysis (Section \ref{sec:Real Data Analysis}).
Here, the settings are similar to before except we assume there are only two studies:
study-I fits the maximal logistic regression model involving all the three covariates
and study-II involves only two covariates, namely $X_1$ and $X_2$.
We assume 
$\rho_{I}=\rho_{II}=\rho_{b}$.
In our estimation,
we further considered an added complexity to account for study specific intercept terms for the maximal logistic regression model
\begin{equation*}
  Y\mid(X_{1}, X_{2}, X_{3}, \text{study}) \sim \text{Bernoulli}([1+\exp\{-(\beta_{0,\text{study}}^{*}+\beta_{1}^{*}X_{1} + \beta_{2}^{*}X_{2} + \beta_{3}^{*}X_{3})\}]^{-1})
\end{equation*}
so that the prevalence of the outcome, $\mbox{pr}(Y=1)$, could be different across the two studies.
In this setting,
the maximal set of parameters that are to be estimated through GENMETA can be defined as $\beta^{*}=(\beta_{0,\mbox{study-I}},\beta_{0,\mbox{study-II}},\beta_1,\beta_2,\beta_3)$.
We simulated data using values of intercept parameters that are identical across the two models,
but for estimation we allowed the intercept parameters to be different.
For the sake of comparison,  we also fitted a reduced model for study-I and conducted a standard multivariate meta-analysis of the underlying common parameters ($\theta_1$ and $\theta_2$) across the two studies. We assume the sample sizes for the two studies to be $n_{1}=500$ and $n_{2}=5000$, and that for the reference dataset to be $n=300$.

From the results reported in Table \ref{table:simu for data analysis},
we observe that in this simulation setting the reduced models produce biased estimate for $\beta_1^{*}$,
but not for $\beta_2^{*}$.
The result is intuitive given that the omitted covariate $X_3$ is primarily correlated with $X_1$.
As a result, standard meta-analysis was nearly unbiased for $\beta_2^*$, but not for $\beta_1^*$.
Parameter estimates from the maximal model from study-I are unbiased for all parameters,
but have much larger standard error compared to meta-analysis for estimation of $\beta_2^*$.
The GENMETA estimator produced unbiased estimates for all parameters and at the same time has comparable efficiency
as standard meta-analysis for estimation of $\beta_2^*$.
These results highlight the desirable feature of the GENMETA estimator that it can effectively combine information across studies to minimize bias due to omitted covariates and yet utilize all the information available across the partially informative studies.

\begin{table}
\centering
\begin{threeparttable}
\caption{A Simulation for Understanding Real Data Analysis}{%
    \begin{tabular}{lccccccc}
     & \multicolumn{2}{c}{Study I} & \multicolumn{1}{c}{Study II} & \multicolumn{1}{c}{Meta} & \multicolumn{2}{c}{GENMETA} \\ 
      & Maximal & Reduced & Reduced & Reduced  & Reduced & Maximal \\
  $\beta_i^*$  & PE (SD) & PE (SD) & PE (SD) &  PE (SD) & PE (SD) & PE (SD) \\ 
\vspace{-3mm} &&&&&&& \\
    $\beta_{1}^{*}$ & .270 (.149) & .429 (.116) & .424 (.037) &  .424 (.035) & .425 (.035) & .268 (.088) \\
    $\beta_{2}^{*}$ & .263 (.111) & .243 (.112) & .236 (.035) &  .236 (.034) & .237 (.034) & .263 (.039) \\
    $\beta_{3}^{*}$ & .258 (.136) & NA & NA  & NA & NA & .255 (.135) \\
    \hline
  \end{tabular}}
  \label{table:simu for data analysis}
  \begin{tablenotes}[para,flushleft]
      \small
      \item Point estimates (PE) and standard deviations (SD) from logistic regression with
   reduced and maximal models, meta-analysis and GENMETA estimation
   with $\beta_{1}^{*}=\beta_{2}^{*}=\beta_{3}^{*}=\log(1.3)\approx .262$.
   NA means there is no corresponding estimator.
    \end{tablenotes}
  \end{threeparttable}
\end{table}

\subsection{Heterogeneous Population}

In this section, we consider simulation studies where the underlying assumption of the homogeneity of covariate distribution across populations may be violated in multiple different ways. As a bench mark for comparison, we will describe setting (I) as the same setting as the the one we simulate under homogeneous population.
In the setting (II), we allow the means or/and variances to vary across the populations underlying the studies and reference sample, keeping the correlations to remain constant. Specifically, we assume the mean-vector for the three covariates can take one of three possible values:  $\mu_h = (1,1,1)$, $\mu_m = (0.5,0.5,0.5)$ and  $\mu_b = (0,0,0)$.  Similarly, the variance-vector is also allowed to vary across three possible set of values: $\sigma_h^2 = (2,2,2)$, $\sigma_l^2 = (0.5,0.5,0.5)$, $\sigma_b^2 = (1,1,1)$. In the setting (III),  we then allow the correlations among the covariates to vary across populations. Here we also allow three possible set of correlation vector $\rho$ as $\rho_{l}=(0.2,0.4,0.0)$ , $\rho_{h}=(0.4,0.8,0.2)$ and $\rho_{b}=0.3,0.6,0.1)$. Finally, we consider simulation setting (IV), where we allow for potential different inclusion criteria across studies leading to possible violations of the assumption of homogeneity of the covariate distribution. Specifically, we first simulate an underlying study base using the setup described in simulation setup (I), and then for study-I we only keep individuals with $X_1 >-0.5$ and $X_2 <0.5$, and in study-II we keep individuals with $X_1>0$. Finally, we consider an alternative simulation scenario where we assume the covariates are log-normally distributed by defining $X=\exp(W)$, where $W$ is generated from multivariate normal distribution following the same settings as I-IV described above

When covariates were normally distributed, we observe that (see Table \ref{table:robustness}) the proposed method is not very sensitive to underlying assumption of homogeneity of covariate distribution.  In the setting (II), where the mean or/and variances of the covariates are varied across the population, but correlations are kept fixed, there is virtually no bias. In setting (III), where correlations are varied,  we observe more noticeable, but still small, biases in parameter estimates. In setting (IV), when the inclusion criteria are varied across studies, there is also very minimal bias. When covariates are log-normally distributed, however, we observe that (see Table 1 in Supplementary Material) the method could be more sensitive to the violation of the underlying homogeneity assumption. In particular, when the inclusion criteria varied across studies (setting IV), large bias in point estimate and low coverage probability are observed for estimation of coefficient associated with $X_2$, the covariate which is used to define fairly non-overlapping inclusion criterion across two studies. Notably, even in this scenario, minimal bias is observed for estimation of the other covariates in the model.

\begin{table}
\begin{threeparttable}
\caption{Robustness of GENMETA Estimation (Normally Distributed Covariates)}{%
\begin{tabular}{ccccccrcccc}
  Setting  & Study-I & Study-II & Study-III & Reference & $\beta_{i}^{*}$ & Bias & SD (ESD) & RMSE & CR & AL \\ 
\vspace{-1mm} &&&&&&&&&& \\
  & $\mu_b$ & $\mu_b$ & $\mu_b$& $\mu_b$ & $\beta_{1}^{*}$ &  .001 & .111 (.112) & .111 & .947 & .437 \\
I  &$\sigma_b^2$ & $\sigma_b^2$ & $\sigma_b^2$ & $\sigma_b^2$ &  $\beta_{2}^{*}$ & -.002 & .098 (.099) & .098 & .956 & .389 \\
 & $\rho_b$ & $\rho_b$ & $\rho_b$ & $\rho_b$ & $\beta_{3}^{*}$ & .005 & .096 (.098) & .096 & .954 & .382 \\  
\vspace{-1mm} & & &&&&&&&& \\
  & $\mu_b$ & $\mu_h$ & $\mu_m$ &  $\mu_b$ & $\beta_{1}^{*}$ & .010 & .103 (.104) & .103 & .952 & .405 \\
  & $\sigma_b^2$  & $\sigma_b^2$ & $\sigma_b^2$ & $\sigma_b^2$ & $\beta_{2}^{*}$ &  -.006 & .083 (.083) & .083 & .954 & .324 \\
  & $\rho_b$ & $\rho_b$ & $\rho_b$ & $\rho_b$ & $\beta_{3}^{*}$ &  .005 & .085 (.088) & .085 & .956 & .343 \\   
\vspace{-3mm} & &  &&&&&&&& \\
    & $\mu_b$ & $\mu_b$ & $\mu_b$ & $\mu_b$ & $\beta_{1}^{*}$ &  .003 & .139 (.136) & .139 & .939 & .529 \\
 II   & $\sigma_b^2$ & $\sigma_h^2$ & $\sigma_l^2$ & $\sigma_b^2$ &  $\beta_{2}^{*}$ &  -.003 & .084 (.086) & .084 & .956 & .335 \\
  & $\rho_b$ & $\rho_b$ & $\rho_b$ & $\rho_b$ & $\beta_{3}^{*}$ &  .003 & .112 (.111) & .112 & .949 & .431 \\   
\vspace{-3mm} &&&&&&&&&&\\
   & $\mu_b$ & $\mu_h$ & $\mu_m$ &  $\mu_b$  & $\beta_{1}^{*}$ & .013 & .124 (.126) & .125 & .946 & .493 \\
  & $\sigma_b^2$ & $\sigma_h^2$ & $\sigma_l^2$ & $\sigma_b^2$ & $\beta_{2}^{*}$&  -.006 & .073 (.075) & .073 & .958 & .291 \\
   & $\rho_b$ & $\rho_b$ & $\rho_b$ & $\rho_b$ & $\beta_{3}^{*}$ &  .005 & .097 (.100) & .097 & .949 & .391 \\   
\vspace{-1mm} & & &&&&&&&& \\
  & $\mu_b$ & $\mu_b$ & $\mu_b$ & $\mu_b$ & $\beta_{1}^{*}$ & -.092 & .142 (.151) & .169 & .958 & .579 \\
 & $\sigma_b^2$ & $\sigma_b^2$ & $\sigma_b^2$ & $\sigma_b^2$  & $\beta_{2}^{*}$&   .019 & .105 (.109) & .107 & .963 & .423 \\
  & $\rho_b$ & $\rho_b$ & $\rho_b$ & $\rho_h$ & $\beta_{3}^{*}$ &  .053 & .120 (.129) & .131 & .971 & .495 \\   
\vspace{-3mm} & & &&&&&&&& \\
  &  $\mu_b$ & $\mu_b$ & $\mu_b$ & $\mu_b$ & $\beta_{1}^{*}$ &  .035 & .099 (.099) & .106 & .917 & .385 \\
 &  $\sigma_b^2$ & $\sigma_b^2$ & $\sigma_b^2$ & $\sigma_b^2$ &  $\beta_{2}^{*}$&  .002 & .096 (.096) & .096 & .954 & .377 \\
  & $\rho_b$ & $\rho_b$ & $\rho_b$ & $\rho_l$ & $\beta_{3}^{*}$ &  .012 & .087 (.087) & .088 & .944 & .343 \\   
  \vspace{-3mm} & & &&&&&&&& \\
   &  $\mu_b$ & $\mu_b$ & $\mu_b$ & $\mu_b$ & $\beta_{1}^{*}$ &  .060 & .113 (.113) & .128 & .916 & .443 \\
 III &  $\sigma_b^2$ & $\sigma_b^2$ & $\sigma_b^2$ & $\sigma_b^2$ &  $\beta_{2}^{*}$&  -.001 & .096 (.097) & .096 & .955 & .379 \\
  & $\rho_l$ & $\rho_b$ & $\rho_h$ & $\rho_l$ & $\beta_{3}^{*}$ &  -.006 & .103 (.102) & .104 & .944 & .398 \\   
  \vspace{-3mm}  & & &&&&&&&& \\
   &  $\mu_b$ & $\mu_b$ & $\mu_b$ & $\mu_b$ & $\beta_{1}^{*}$ &  .039 & .130 (.132) & .135 & .939 & .515  \\
   &  $\sigma_b^2$ & $\sigma_b^2$ & $\sigma_b^2$ & $\sigma_b^2$ &  $\beta_{2}^{*}$&  -.006 & .097 (.100) & .097 & .958 & .392 \\
  & $\rho_l$ & $\rho_b$ & $\rho_h$ & $\rho_b$ & $\beta_{3}^{*}$ &  -.027 & .116 (.118) & .119 & .944 & .461 \\   
  \vspace{-3mm}  & & &&&&&&&& \\
   &  $\mu_b$ & $\mu_b$ & $\mu_b$ & $\mu_b$ & $\beta_{1}^{*}$ &  -.036 & .165 (.173) & .169  & .957 & .671 \\
  &  $\sigma_b^2$ & $\sigma_b^2$ & $\sigma_b^2$ & $\sigma_b^2$ &  $\beta_{2}^{*}$&  .013 & .103 (.109) & .104 & .962 & .424 \\
& $\rho_l$ & $\rho_b$ & $\rho_h$ & $\rho_h$ & $\beta_{3}^{*}$ &  .003 & .143 (.153) & .143 & .959 & .591 \\   
  \vspace{-1mm}  & & &&&&&&&& \\
 &  & & $\mu_b$ & $\mu_b$ & $\beta_{1}^{*}$ &  .014 & .123 (.127) & .124  & .961 & .494 \\
 IV &  $X_1>-0.5,$ & $X_2 > 0$ & $\sigma_b^2$ & $\sigma_b^2$ &  $\beta_{2}^{*}$&  -.008 & .105 (.109) & .105 & .965 & .428 \\
  &  $X_2<0.5$  &  & $\rho_b$ & $\rho_b$ & $\beta_{3}^{*}$ &  -.001 & .094 (.093) & .093 & .958 & .366 \\  
 \hline
\end{tabular}}
\label{table:robustness}
\begin{tablenotes}[para,flushleft]
      \small
      \item Biases,
standard deviation (SD),
estimated standard deviation (ESD),
square roots of mean square errors (RMSE),
coverage rates (CR),
and average lengths (AL)
of 95\% confidence intervals of the GENMETA estimates using the study covariance estimators in the setting of logistic regression. In setting (I), data are simulated in ideal setting there the covariate distribution, characterized by mean, sd and correlation of normal variates, are assumed to same across all populations. In setting (II)-(IV), the assumption is violated by creating variations in mean/sd, correlations and selection criterion across the studies and reference sample. with different study and reference sample. The vector of covariate means, variances and correlations are denoted by denoted by $\mu_*  = (\mu_1,\mu_2,\mu_3)$, $\sigma^2_*  = (\sigma_1^2,\sigma_2^2, \sigma_3^2)$ and $\rho_*  = (\rho_{12},\rho_{23},\rho_{13})$ for $* \in \{b,l,m,h\}$, where $\mu_b = (0,0,0)$, $\mu_m = (0.5,0.5,0.5)$, $\mu_h = (1,1,1)$; $\sigma_b^2 = (1,1,1)$, $\sigma_l^2 = (0.5,0.5,0.5)$, $\sigma_h^2 = (2,2,2)$ and $\rho_b = (0.3,0.6,0.1)$, $\rho_h = (0.4,0.8,0.2)$, $\rho_l = (0.2,0.4,0)$. Estimated standard deviation are obtained by the asymptotic formula (\ref{equ:covariance matrix}) and used to construct 95\% confidence interval.
    \end{tablenotes}
  \end{threeparttable}
\end{table}
\subsection{Power Evaluation of the Diagnostic Test ($T_{GENMETA}$)}
We assess the power of the proposed test statistic, $T_{GENMETA}$ in the presence of heterogeneity in the regression parameters ($\beta$) across the studies. In the context of standard multivariate meta-analysis, where it is assumed that all the studies ascertain the same set of covariates, test for heterogeneity is performed using standard multivariate Cochran's test-statistic in the form 
\begin{equation*}
Q = \sum_{k=1}^K(\hat{\beta}_k - \hat{\beta}_{meta})^TS_k^{-1}(\hat{\beta}_k - \hat{\beta}_{meta})
\end{equation*}
where $\hat{\beta}_{meta} $ is the usual multivariate meta-analysis estimate and $S_k$ is the standard error of $\hat{\beta}_k$ for $k = 1,\dots,K$. We will utilize $Q$ as a benchmark to evaluate the power of $T_{GENMETA}$. 

In all simulations,  as before, we assume the existence of three separate studies and relationship between a binary outcome variable $Y$ and three covariates $(X_1,X_2,X_3)$ in each study follows the same logistic regression model of the form (\ref{Equation:Simu}).
However, instead of assuming a fixed set of $\beta$ across all studies, we simulate different values of $\beta$ from a normal distribution with mean $(\beta_0^*, \beta_1^*, \beta_2^*, \beta_3^*) = (\log 1.3, \log 1.3, \log 1.3)$ and variance $\sigma^2I$, where the parameter $\sigma^2>0$ is varied to control the degree of heterogeneity across studies.
As before, we assume that  
$(X_1,X_2,X_3)$ follows a multivariate normal distribution with mean zero, unit variances and underlying correlations
$\rho=\rho_{12}=0.3,\rho_{13}=0.6,\rho_{23}=0.1)$ across all the three studies.
We simulate data for the different studies from the above random-effects logistic regression model and then fit reduced models of the form (\ref{Equation:Simureduced})
to the three different studies. In particular, we assume $X_1$ and $X_2$ included in Study-I, $X_2$ and $X_3$ in Study-II and $X_1$ and $X_3$ in Study-III. We fix the sample size of the studies at $n_1=3000$, $n_2=5000$ and $n_3=10000$ and vary sample size of the reference dataset. The level of the test is set to 5\%.  For the purpose of comparison, we also fit the maximal model to each study involving all three covariates and apply the standard Q-statistics for testing heterogeneity. 

Comparison of power of $T_{GENMETA}$ and $Q$ statistics shows that, as expected, the power for both tests increases as a function of degree of heterogeneity, $\sigma^2$ (Figure \ref{figure:Power_n_ref_50_simulation_setting_1}). Clearly, $T_{GENMETA}$ suffers some loss of power as it handles missing covariates, but it retains substantial power, even with small reference dataset ($n=100$), to remain practically useful.
\begin{figure}
  \centering
  \includegraphics[width=12cm,height=10cm, keepaspectratio]{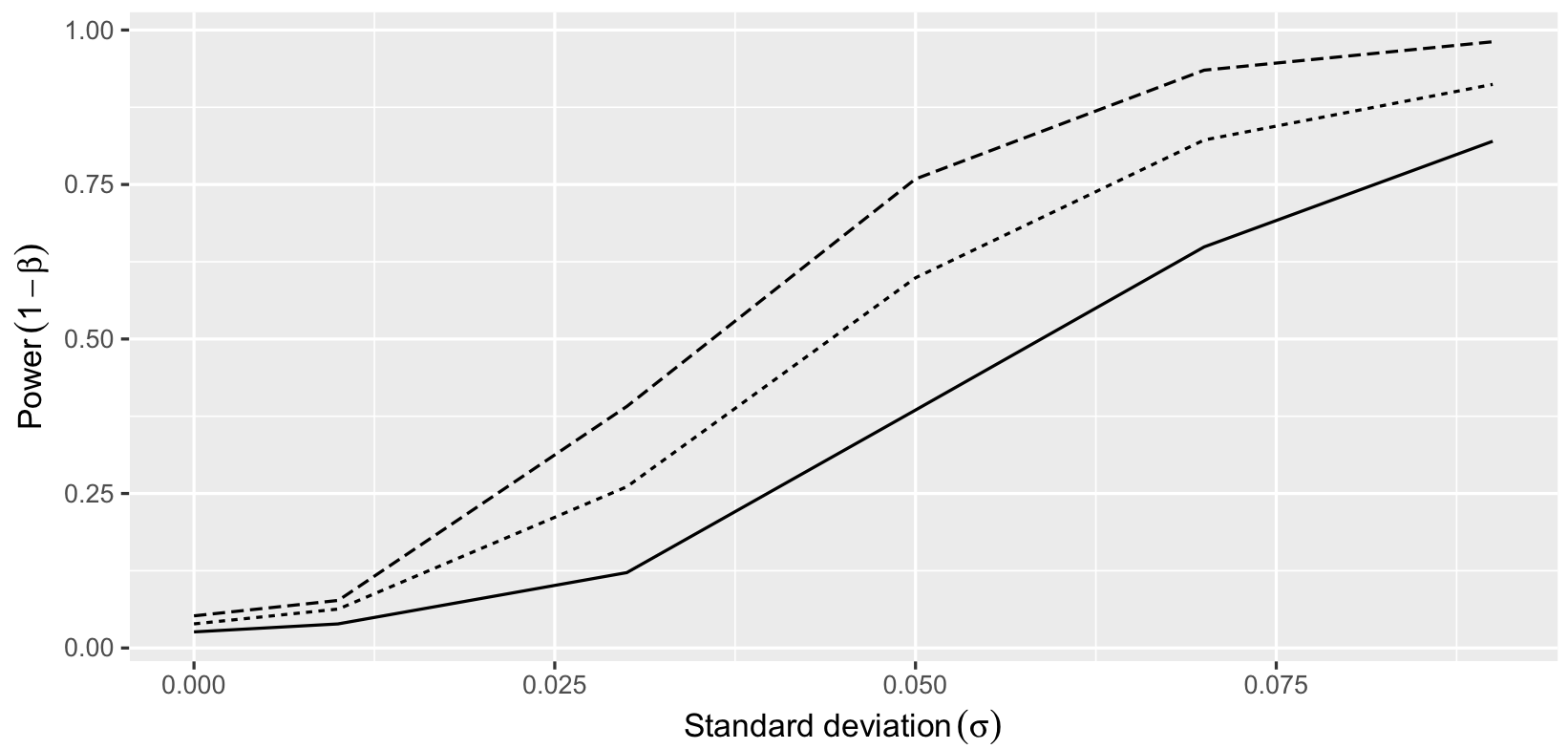}
  \caption{Power curves of simple multivariate meta-analysis test statistic ($Q$) and $T_{GENMETA}$ for simulated datasets. The long-dashed line is for the simple meta-analysis estimator. The solid and dotted lines are for GENMETA estimators with reference data sample sizes 100 and 500, respectively. Level of the test ($\alpha$) is set to 0.05.}
  \label{figure:Power_n_ref_50_simulation_setting_1}
\end{figure}

\section{Real Data Analysis}
\label{sec:Real Data Analysis}
In this section,
we illustrate an application of the proposed methodology to develop a model for predicting risk of breast cancer based on combination of different risk factors using data from multiple studies.
The first study, the Breast Prostate Colorectal Cancer Cohort study (BPC3),
includes a total of 7448 cases and 8812 controls, drawn from eight different underlying cohorts.
Details of the study, including its recent application for the development of breast cancer risk prediction model,
can be found elsewhere \citep{mass}.
In the current analysis, we focus on the analysis of breast cancer risk associated with a selected set of factors,
including family history, age at menarche, age at first birth and weight.
The second study involves a dataset involving 1217 cases and 1616 controls from the Breast Cancer Detection and Demonstration Project (BCDDP).
The study has been previously used to develop an updated version of the widely popular Breast Cancer Risk Assessment tool \citep{chen} to incorporate mammographic density,
the areal proportion of breast tissue that is radiographically dense,
known to be a strong risk factor for breast cancer.
The dataset from the BCDDP study included mammographic density and number of previous breast biopsy,
in addition to all the factors considered in the BPC3 data analysis.
Let $X$ denote the common set of covariates that are measured across both the studies and $Z$ be the factors that are available only in BCDDP.
The goal is to estimate parameters associated with an underlying logistic regression model
that includes all of the different factors.
While the BPC3 study is large in size and represents multiple populations,
it has information on more limited number of risk factors.
The BCDDP study, on the other hand, has information on extended set of risk factors,
but is much smaller in size.
A combined analysis of these two studies can potentially lead to more generalizable and precise estimate of risk parameters.

Throughout the analysis, we used a sample of 137 cases and 163 controls from the BCDDP study
as the reference sample based on which the distribution of covariates are estimated.
To maintain independence of the reference and study samples, we exclude the reference sample from the primary analysis of the BCDDP study that involved estimation of the log-odds-ratio parameters. Further, both the studies involve case-control sampling with similar case-control proportions. In general, if non-random sampling is used for selection of subjects in any of the studies, then the covariate distribution underlying the GENMETA estimating equation needs to be adjusted to account for the study design. In this application, because we had access to the the BCDDP study, we could adjust for the design effect by simply selecting a reference sample that includes cases and controls in similar ratio as the main studies. In general, however, the effect of non-random sampling design for the main studies may need to adjusted through careful weighting of subjects in the reference sample.

For each of the eight cohorts within the BPC3 study and for the BCCDP study,
we first fit a reduced logistic regression model including $X$. 
All models included age as an additional cofactor and included study specific intercept parameters and age effects.
Specifically, we consider underlying models in the form
\begin{equation}\label{eq:BPC3Model}
(Y\mid X, \text{Age},  \text{study}=k)
\sim \text{Bernoulli}((1+\exp\{-(\theta_{0k} + \theta_{A_k}Age+\bd{\theta}^T_XX )\})^{-1}).
\end{equation}

We applied the diagnostic test for model violation  to these datasets. We found the value of the test-statistic ($\hat{T}_{GENMETA}$) to be 59.01 and the corresponding p-value to be 0.366 under a $\chi^2_{(56)}$ distribution. Thus, it appears that the underlying model assumptions are unlikely to be grossly violated in this application.

First, to illustrate how the proposed GENMETA estimator compares to standard meta-analysis method,
we consider estimating the common underlying parameters of interest $\bd{\theta}_{X}$ using these two alternative methods.
We fitted model (\ref{eq:BPC3Model}) separately for each study and obtained estimates of the parameters and covariance matrices.
Then, for the underlying common parameter of interests $\bd{\theta}_{X}$,
we conducted a standard multivariate meta-analysis using the corresponding subset of parameters estimates and covariance matrices.
Alternatively, using the parameters estimates and variance-covariance matrices from the individual studies, and using the set aside BCDDP sample as the reference dataset to estimate the joint distribution of $X$ and $age$, we estimated all of the parameters of model (\ref{eq:BPC3Model}) using the GENMETA procedure.
From the results reported in Table \ref{table:Analysis.24.25},
we observe that in this setting,
the meta-analysis and GENMETA estimators produce similar estimates
as well as their standard errors across all the different risk-factors of interest.
In one of the results stated earlier, we have seen theoretically that in an idealized setting
where all the models and underlying populations are identical,
the two estimators are asymptotically equivalent.
It's encouraging to observe the close correspondence
between the estimators in the data analysis,
which includes a diverse set of studies that are likely to have significant heterogeneity across the underlying populations.
In particular, for a number of the risk-factors (e.g family history),
coefficient estimates were noticeably different across the two studies.
When significant heterogeneity existed,
the meta-analyzed estimates were pooled closer to those from the BPC3 study due to its large sample size.

Next, we turn our attention to the analysis of data from the BCDDP study
using a maximal model that includes $X$ and the additional covariates, mammographic density and number of previous breast biopsy.
Comparison of the parameter estimates associated with
$X$ across the maximal and reduced model within the BCDDP study
indicates major differences in the estimates of the coefficients associated with weight.
In the maximal model, higher weight is found to be be much more strongly associated with increased risk of breast cancer.
The unmasking of the effect of weight in the maximal model is intuitive
given that body weight and mammographic density is known to have strong negative correlation.
Although not as dramatic,
there are some differences in effects of age at menarchy and age at first birth between the maximal and reduced models,
also possibly because of modest correlation of these factors with mammographic density and number of previous breast biopsy.
The effect of family history, however, is almost identical across the two models.

Finally, we used the GENMETA method to combine estimates of the parameters of the maximal model from the BCDDP study
and those from the reduced models from the eight BPC3 cohorts.
We assumed an underlying maximal model of interest across the 9 studies in the form
\begin{equation*} \label{eq:DataFullModel}
(Y\mid X, Z, \mbox{Age},  \mbox{study=k})
\sim
\text{Bernoulli}
([1+\exp\{-(\theta_{0k} + \theta_{A_k}Age+ \bd{\beta}^T_{X}X + \bd{\beta}^T_{Z}Z)\}]^{-1}).
\end{equation*}

We observe that GENMETA produces estimates of effect of family history and associated standard error very similar to those observed based on the standard meta-analysis of the reduced models across the nine cohorts. The estimate is pooled heavily towards the BPC3 study due to its large sample size.
In contrast, the GENMETA estimates for weight are very similar to those
observed from the maximal model only within the BCDDP study.
These results are consistent with simulation studies,
where GENMETA behaves similar to reduced model meta-analysis
when omitted covariates do not cause notable bias.
In contrast,
when omitted covariates cause important bias,
the GENMETA estimator is pooled towards estimates from maximal or more complete models that may be available
from a restricted set of studies.
The behavior of GENMETA for the two other covariates, age at menarchy and age at first birth, were in between, which is also intuitive
given that we had observed their coefficients changed notably,
but less dramatically,
in the maximal model compared to the reduced model within the BCDDP study.
The GENMETA parameter estimates and standard errors for the additional variables mammographic density and number of previous breast biopsy,
were similar to those observed for the maximal model in the BCDDP,
the only study which had information on these two factors.
Thus, overall the data analysis illustrates that the GENMETA estimator behaves in a similar manner as meta-analysis for combining information across multiple possibly heterogeneous studies, but it has the added flexibility to effectively combine information from disparate models.

\section{Discussion}
The proposed method can be viewed as a natural extension of the traditional fixed effect meta-analysis method
that is widely used in practice.
Both simulation studies and data analysis demonstrate that the method not only provides theoretically valid and efficient inference
in idealized conditions, but also can perform robustly in non-idealized settings. 
A critical element of the proposed method is the access to a reference dataset.
While the ideal choice of the reference dataset will vary by applications,
publicly available survey data,
which collect information on a wide variety of factors,
can be useful broadly.
In fact, in large scale genetic association studies,
use of reference samples, such as the 1000 Genome study,
are commonly used for estimation correlation parameters
across genetic markers in the genome \citep{1kgenome, 1kgenome2, lee}.
For epidemiologic studies,
good resources for reference dataset for the US population include
the National Health Interview Survey \citep{adams, botman,bloom} and the National Health and Nutrional Examination Survey \citep{fang, jiang, ferranti, idler, goodman}, which routinely collect data on a wide variety of health and lifestyle related factors. If multiple studies coordinate through consortium effort,
which is increasingly common in biomedical applications,
then studies which have most complete information, at least on some sub-samples, can provide reference sample.

When information on all covariates are not available in a single reference sample,
one may have to consider simulation for generating such data
by combining information from multiple studies under some modeling assumptions. As the access to large reference dataset
that is ideally representative of the underlying study populations can be difficult,
we found two aspects of GENMETA to  be appealing.
First, the sample size for the reference dataset can be small relative to the study datasets
and yet GENMETA can have reasonable efficiency.
In fact, increasing the sample size for the reference dataset beyond certain threshold
does not have an impact on the efficiency of GENMETA.
Second, although technically the method requires all the populations underlying the studies and the reference dataset to be the same,
in practice, the method can be robust to a reasonable degree of heterogeneity in distribution of covariates. However, it is possible to have a large bias when estimating coefficients associated with covariates that have been used to define widely varying inclusion criteria. When different studies follow very different designs it is best to obtain study-specific reference samples for estimating the underlying moment equations. Alternatively, it may be possible to modify a large reference sample by using study-specific sampling weights/inclusion criteria when estimating the moment equations. Dealing with study-specific covariates, such as centers within a study, can also pose challenges as information on such variables are not expected to be available from a common reference sample.  We have illustrated in our data example that it is possible to deal with such variables by imposing additional independence assumptions from other factors. In general, such complications need to be dealt in a case-by-case basis and some study specific reference samples may be needed to avoid making strong assumptions. Further research is merited to explore these and other practical challenges in implementation of the proposed method.

In general, we believe caution is needed for interpretations and applications of models
that may be developed by combining information from disparate models across multiple studies.
A model developed from a single study with complete information,
although may be inefficient and may lack generalizability,
is more likely to be internally consistent
and thus can provide valid etiologic inference even if it is not representative of the general population.
On the other hand,
etiologic interpretation of parameters can be difficult
when the underlying model is developed using information
across multiple studies
that are potentially heterogeneous.
For the development of predictive models, however,
where the focus is not so much parameter interpretation,
development of rich models by combining information across multiple studies
and then validating such models in independent studies can be an appealing strategy. These and other practical issues related to model development using multiple data sources have been also discussed in several recent articles \citep{wang_merging_2015, hanpeisong2017, cheng_informing_nodate, estes_empirical_2018, cheng_informing_nodate}.

In this article, we used generalized method of moments as the underlying inferential framework. Alternatively, inference could be also performed using empirical likelihood theory \citep{Qin1994, Qin2000, chatterjee} exploiting the same set of moment equations as we propose. While in small sample, empirical likelihood estimators may perform better,  implementation can be substantially more complex.  Recently,  a simulation based method has been also described for combining information on model  parameters across disparate studies \citep{Rahmandad2017}. 
Computationally, the proposed method may also enjoy substantial advantages in dealing with complex models, such as those in high-dimensional settings, where repeated model fitting on simulated data is extensive.
Further research is merited in multiple directions to increase the practical utility of GENMETA.
It is possible that in some applications we may have information only on subsets of parameters
underlying the fitted reduced models.
It's an open question how such partial information can be used to set up the underlying moment equations in the GENMETA procedure. Ideally, to increase robustness of inference, the GENMETA procedure should use study specific reference sample
for setting up the moment equations.
For this purpose, it may be useful to develop strategies to combine information
on a common reference sample with complete covariate information and data
from individual studies that have partial covariate information.

\begin{landscape}
\begin{table}[!ht]
\begin{threeparttable}
\caption{Real Data Analysis Results with Meta-Analysis and GENMETA Method}{%
  \begin{tabular}{lrrrrrrrrrrrrr}
     & \multicolumn{2}{c}{BCDDP} & \multicolumn{8}{c}{BPC3} & \multicolumn{1}{c}{Meta} & \multicolumn{2}{c}{GENMETA} \\ 
     & \multicolumn{1}{c}{Maximal}
     & \multicolumn{1}{c}{Reduced}
     & \multicolumn{1}{c}{CPS2}
     & \multicolumn{1}{c}{EPIC}
     & \multicolumn{1}{c}{MCCS}
     & \multicolumn{1}{c}{MEC}
     & \multicolumn{1}{c}{NHS}
     & \multicolumn{1}{c}{PLCO}
     & \multicolumn{1}{c}{WHI}
     & \multicolumn{1}{c}{WHS}
     & \multicolumn{1}{c}{Reduced}
     & \multicolumn{1}{c}{Reduced}
     & \multicolumn{1}{c}{Maximal}   \\
     & \multicolumn{1}{c}{Model}
     & \multicolumn{1}{c}{Model}
     & \multicolumn{1}{c}{Cohort}
     & \multicolumn{1}{c}{Cohort}
     & \multicolumn{1}{c}{Cohort}
     & \multicolumn{1}{c}{Cohort}
     & \multicolumn{1}{c}{Cohort}
     & \multicolumn{1}{c}{Cohort}
     & \multicolumn{1}{c}{Cohort}
     & \multicolumn{1}{c}{Cohort}
     & \multicolumn{1}{c}{Model}
     & \multicolumn{1}{c}{Model}
     & \multicolumn{1}{c}{Model}   \\
     \multicolumn{1}{c}{Risk Factors}
     & \multicolumn{1}{c}{PE(SE)}
     & \multicolumn{1}{c}{PE(SE)}
     & \multicolumn{1}{c}{PE(SE)}
     & \multicolumn{1}{c}{PE(SE)}
     & \multicolumn{1}{c}{PE(SE)}
     & \multicolumn{1}{c}{PE(SE)}
     & \multicolumn{1}{c}{PE(SE)}
     & \multicolumn{1}{c}{PE(SE)}
     & \multicolumn{1}{c}{PE(SE)}
     & \multicolumn{1}{c}{PE(SE)}
     & \multicolumn{1}{c}{PE(SE)}
     & \multicolumn{1}{c}{PE(SE)}
     & \multicolumn{1}{c}{PE(SE)}   \\ 
    FH1 & .80(.14) & .80(.14) & .47(.13) & .29(.15) & .56(.19) & .41(.28) & .48(.08) & .39(.13) & .30(.06) & .28(.19) & .40(.04) & .42(.04) & .37(.08) \\
    AMEN1 & .11(.10) & .07(.10) & -.03(.14) & .02(.09) & -.19(.17) & -.09(.24) & .06(.09) & -.05(.12) & .13(.08) & .03(.17) & .04(.04) & .03(.04) & .04(.06) \\
    AMEN2 & .55(.15) & .45(.15) & -.09(.17) & .04(.12) & -.44(.23) & .35(.35) & .19(.10) & .03(.15) & .19(.09) & .14(.19) & .13(.05) & .13(.05) & .32(.08) \\
    AFB1 & .06(.14) & .18(.15) & .28(.17) & .12(.14) & -.08(.25) & .06(.17) & .39(.20) & .16(.14) & .19(.09) & .92(.23) & .21(.05) & .20(.05) & .05(.09) \\
    AFB2 & .29(.20) & .46(.20) & .73(.24) & .24(.17) & .35(.30) & .05(.26) & .36(.22) & .52(.22) & .44(.13) & .96(.28) & .38(.06) & .38(.07) & .21(.12) \\
    WT1 & .29(.11) & .09(.11) & .09(.14) & -.01(.09) & .22(.18) & .09(.17) & .21(.08) & .09(.13) & -.03(.08) & -.01(.14) & .08(.04) & .08(.04) & .31(.07) \\
    WT2 & .52(.13) & .10(.13) & .16(.14) & .24(.11) & .45(.19) & -.08(.18) & .10(.08) & .09(.13) & .18(.08) & -.16(.15) & .14(.04) & .14(.04) & .63(.09) \\
    NBIOPS & .13(.09) & NA & NA & NA & NA & NA & NA & NA & NA & NA & NA & NA & .13(.10) \\
    MD & .46(.05) & NA & NA & NA & NA & NA & NA & NA & NA & NA & NA & NA & .43(.06) \\
   \hline
  \end{tabular}}
    \label{table:Analysis.24.25}
  \begin{tablenotes}[para,flushleft]
      \small
      \item Combined analysis of BCDDP and BPC3 study to develop  a multivariate logistic regression model for breast cancer risk.  For each cohort within BPC3 and for BCDDP,  standard logistic regression model is applied for fitting reduced models including  FH (family history), AMEN (age at menarche), AFB (age at first live birth) and WT (weight). Parameter estimates of the reduced models across studies are then combined using standard meta-analysis (meta) or GENMETA. For the BCDDP study, a maximal logistic model is fitted including additional covariates mammographic density (MD) and number of previous biopsy (NBIOPS).
These estimates are then combined with with estimates of reduced model parameters from BPC3 studies to obtain GENMETA estimates of the maximal model.  Point estimates (PE) and standard errors (SE) are shown for each analysis.
NA means there is no corresponding estimator.
The variables analyzed include:
FH: binary indicator of family history;
AMEN1 and AMEN2: dummy variables associated with age-at-menarche categories $\geq 14$, $12$--$13$ and $\leq 11$;
AFB1 and AFB2: dummy variables associated with age-at-first-live-birth categories $\leq 20$, $21$--$29$ and $\geq 30$;
WT1 and WT2: dummy variables associated with weight categories $\leq 62.6$, $62.6$--$73.1$ and $\geq 73.1$ in kilograms;
NBIOPS: the number of biopsies coded as a continuous variable
and MD: the standardized mammographic density coded as a continuous variable.
BPC3 contains eight cohorts with abbreviated names CPS2, EPIC, MCCS, MEC, NHS, PLCO, WHI and WHS.
    \end{tablenotes}
  \end{threeparttable}
 \end{table}
\end{landscape}

\section*{Acknowledgement}
Research reported in this article was partially funded through a Patient-Centered Outcomes Research Institute (PCORI) Award (ME-1602-34530) and funding by NIH for the Environmental influences of Child Health Outcomes (ECHO) Cohort Data Analysis Center (U24OD023382). The statements, opinions in this article are solely the responsibility of the authors and do not necessarily represent the views of the Patient-Centered Outcomes Research Institute (PCORI), its Board of Governors or Methodology Committee.

\section*{Software}
\label{sec5}
A software in the form of an R Package (GENMETA) is available on Github through the link : \textit{https://github.com/28pro92/packages-GENMETA}.

\section*{Supplementary Material}
\label{SM}
Supplementary material is available online which includes proof of theorem 1, all the derivations and a table containing the simulation results for log-normally distributed covariates .

\section*{Appendix}
\subsection*{ Assumptions of Theorem 1}
Assumptions (A1)-(A4) are for consistency and
the additional assumptions (A5)-(A9) are for asymptotic normality.

(A1): $C$ is positive semi-definite
and $C E \{U(X; \bd{\beta}, \bd{\theta}^{*})\}=0$ if and only if $\bd{\beta}=\bd{\beta}^{*}$.

(A2): $\bd{\beta}^{*}\in D_{\bd{\beta}}$, which is compact.

(A3): $u_{k}(X;\bd{\beta},\bd{\theta}_{k})$ is continuous for each $(\bd{\beta}, \bd{\theta}_{k}) \in D_{\bd{\beta}} \times \mathcal{N}(\bd{\theta}_{k}^{*})$ with probability one, where $\mathcal{N}(\bd{\theta}^{*}_{k})$ is a neighborhood of $\bd{\theta}^{*}_{k}$ for $k = 1,\dots,K$.

(A4): $E\{\sup_{(\bd{\beta}, \bd{\theta}_{k})\in D_{\bd{\beta}} \times \mathcal{N}(\bd{\theta}^{*}_{k})}||u_{k}(X;\bd{\beta}, \bd{\theta}_{k})||\} < \infty$ for $k = 1,\dots,K$.

(A5): ${\partial}u_{k}(X;\bd{\beta}, \bd{\theta}_{k})/{\partial \bd{\beta}}$ is continuous at each $(\bd{\beta}, \bd{\theta}_{k}) \in \mathcal{N}(\bd{\beta}^{*})\times \mathcal{N}(\bd{\theta}_{k}^{*})$ with probability 1,
where $N(\beta^{*})$ is a neighborhood of $\beta^{*}$.

(A6): $E\{\sup_{(\bd{\beta}, \bd{\theta}_{k})\in \mathcal{N}(\bd{\beta}^{*})\times \mathcal{N}(\bd{\theta}^{*}_{k})}
||{\partial}u_{k}(X, \bd{\beta}, \bd{\theta}_{k})/{\partial \bd{\beta}}||\} <\infty$.

(A7): ${\partial}u_{k}(X;\bd{\beta}^{*}, \bd{\theta}_{k})/{\partial \bd{\theta}_{k}}$ is continuous at
each $\bd{\theta}_{k} \in \mathcal{N}(\bd{\theta}_{k}^{*})$ with probability one.

(A8): $E\{\sup_{\bd{\theta}_{k}\in \mathcal{N}(\bd{\theta}^{*}_{k})}
||{\partial}u_{k}(X, \bd{\beta}^{*}, \bd{\theta}_{k})||/{\partial \bd{\theta}_{k}}\} <\infty$.

(A9): $\bd{\Delta}(\bd{\beta}^{*}, \bd{\theta}^{*})$ exists and is finite and $\bd{\Gamma}(\bd{\beta}^{*}, \bd{\theta}^{*})$ is of full rank.

\subsection*{More on the Global Identification Assumption (A1)} Sometimes it's difficult to practically check the global identification condition. This motivates us to investigate conditions for local identifiability, or equivalently,  the invertibility of the matrix of second derivatives at the true parameter, i.e.,
${\partial ^2}Q(\bd{\beta})/{\partial \bd{\beta}^2}\mid_{\bd{\beta} = \bd{\beta^*}}
=
[E\{{\partial}U(X;\bd{\beta})/{\partial \bd{\beta}}\}^TC
E\{{\partial}U(X;\bd{\beta})/{\partial \bd{\beta}}\}]\mid_{\bd{\beta} = \bd{\beta^*}}$  (Rothenberg, 1971; Engle
and McFadden, 1994), assuming $C$ is a positive definite matrix. The condition can be stated in terms of the equivalent sample version of the matrix, given by, $X^T_{rbind}WX_{A_{diag}}CX^T_{A_{diag}}WX_{rbind}$. As $C$ is a positive definite matrix, the entire local identifiability condition for the sample version then boils down to $X^T_{A_{diag}}WX_{rbind}$ being a full column rank matrix. A sufficient condition for this is $X_{A_{diag}}$ contains information on all the covariates of the maximal model. In other words, the individual covariates in the maximal model have to be part of at least one of the reduced models.
\appendix

\bibliographystyle{authordate1}
\bibliography{GENMETA_arxiv}

\section{Supplementary}
\beginsupplement
\subsection{Asymptotic Equivalence of GENMETA Estimator and Simple Meta-Analysis Estimator
When All the Reduced Models Are the Same to the Maximal Model}
When all the reduced models are the same to the maximal model,
it follows
$\bd{\theta}^*_{k} = \bd{\beta}^*$,
$X_{A_k} = X$
and $g_{k}=f$ for $k=1,2, \ldots, K$.
Then, for each $k$,
$u_{k}(\bd{X}; \bd{\beta}^*,\bd{\theta}^*_{k})
= u_{k}(\bd{X}; \bd{\beta}^*,\bd{\beta}^{*})
= \int s_{k}(y\mid X_{A_{k}};\bd{\beta}^*)f(y\mid X;\bd{\beta}^*)dy = \vec{0}$.
By the definition of $\bd{\Delta}$, we have $\bd{\Delta} = \vec{0}$.
On the other hand,
assuming
$E_{Y\mid\bd{X}}\{\nabla_{\bd{\theta}_{k}} s_{k}(\bd{\theta}_{k}^{*})\}
=
\nabla_{\bd{\theta}_{k}}E_{Y\mid\bd{X}}\{s_{k}(\bd{\theta}_{k}^{*})\}$
with $s_{k}(\bd{\theta}_{k}^{*})=s_{k}(Y\mid X_{A_k}; \bd{\theta}_{k}^{*})$,
it follows
$\bd{\Lambda}_{k}=(1/c_{k})I(\bd{\theta}_{k}^{*})$,
where $I(\bd{\theta}_{k}^{*})$ is the Fisher's information matrix of $g_{k}$ or $f$.
Then, the optimal $C$ is
\begin{equation*}
C_{\text{opt}}
= \bd{\Lambda}^{-1}
= \diag(c_1\bd{\Sigma},\dots,c_K\bd{\Sigma}),
\end{equation*}
where $\bd{\Sigma}=I(\bd{\theta}_{k}^{*})^{-1}$.
Denote as $\hat{C}_{\text{opt}}$ a consistent estimator of  $C_{\text{opt}}$.
Then, the GENMETA estimator with $\hat{C}_{\text{opt}}$ is
\begin{equation*}
\hat{\bd{\beta}}_{\text{opt}}
= {\argmin}_{\bd{\beta}}
U^T_n(\bd{\beta}, \hat{\bd{\theta}})\hat{C}_{\text{opt}}U_n(\bd{\beta}, \hat{\bd{\theta}}).
\end{equation*}
Under regularity conditions similar to those in Theorem 1,
$\hat{\bd{\beta}}_{\text{opt}} \rightarrow \bd{\beta}^*$ in probability.
By Mean Value Theorem,
\begin{equation}\label{eqn:aseq1}
U_n(\hat{\bd{\beta}}_{\text{opt}}, \hat{\bd{\theta}})
=
U_n(\bd{\beta}^*, \hat{\bd{\theta}})
+
G_n(\bar{\bd{\beta}}, \hat{\bd{\theta}})(\hat{\bd{\beta}}_{\text{opt}} - \bd{\beta}^*),
\end{equation}
where $\bar{\bd{\beta}}$ is the mean value
and $G_n(\bar{\bd{\beta}}, \hat{\bd{\theta}})
= {\partial}U_n(\bd{\beta}, \hat{\bd{\theta}})/{\partial \bd{\beta}}\mid_{\bd{\beta} = \bar{\bd{\beta}}}$.
By the first order condition, $\hat{\bd{\beta}}_{\text{opt}}$ satisfies
$G^T_n(\hat{\bd{\beta}}_{\text{opt}},\hat{\bd{\theta}})\hat{C}_{\text{opt}}U_n(\hat{\bd{\beta}}_{\text{opt}}, \hat{\bd{\theta}}) = 0$.
Left-multiplying (\ref{eqn:aseq1}) by $G^T_n(\hat{\bd{\beta}}_{\text{opt}}, \hat{\bd{\theta}})\hat{C}_{\text{opt}}$,
it follows
\begin{equation}\label{eqn:aseq2}
\hat{\bd{\beta}}_{\text{opt}} - \bd{\beta}^*
=
-
\{G^T_n(\hat{\bd{\beta}}_{\text{opt}},\hat{\bd{\theta}})\hat{C}_{\text{opt}}\vec{G}_n(\bar{\bd{\beta}}, \hat{\bd{\theta}})\}^{-1} \{G^T_n(\hat{\bd{\beta}}_{\text{opt}},\hat{\bd{\theta}})\hat{C}_{\text{opt}}U_n(\bd{\beta}^*, \hat{\bd{\theta}})\}
\end{equation}
Also,
\begin{equation*}
G_n(\hat{\bd{\beta}}_{\text{opt}}, \hat{\bd{\theta}})
= \frac{\partial }{\partial \bd{\beta}}\vec{U}_n(\bd{\beta}, \hat{\bd{\theta}})\mid_{\bd{\beta} = \hat{\bd{\beta}}_{\text{opt}}}
=
\begin{pmatrix}
\frac{\partial }{\partial \bd{\beta}}u_1(\bd{\beta}, \hat{\bd{\theta}}_1)\mid_{\bd{\beta} = \hat{\bd{\beta}}_{\text{opt}}}\\
\vdots \\
\frac{\partial }{\partial \bd{\beta}}u_K(\bd{\beta}, \hat{\bd{\theta}}_K)\mid_{\bd{\beta} = \hat{\bd{\beta}}_{\text{opt}}}
\end{pmatrix}.
\end{equation*}
Under regularity conditions similar to those in Theorem 1,
${\partial }u_k(\bd{\beta}, \hat{\bd{\theta}}_k)/{\partial \bd{\beta}}\mid_{\bd{\beta} = \hat{\bd{\beta}}_{\text{opt}}}
=\bd{\Sigma}^{-1}+o_{p}(1)$ for each $k$.
Then,
\begin{equation}\label{eqn:Gn:beta.hat}
G_n(\hat{\bd{\beta}}_{\text{opt}}, \hat{\bd{\theta}})
= \begin{pmatrix}
\bd{\Sigma}^{-1}\\
\vdots \\
\bd{\Sigma}^{-1}
\end{pmatrix} + o_p(1).
\end{equation}
Similarly,
\begin{equation}\label{eqn:Gn:beta.bar}
G_n(\bar{\bd{\beta}}, \hat{\bd{\theta}})
= \begin{pmatrix}
\bd{\Sigma}^{-1}\\
\vdots \\
\bd{\Sigma}^{-1}
\end{pmatrix} + o_p(1).
\end{equation}
On the other hand,
under regularity conditions similar to those in Theorem 1,
$u_k(\bd{\beta}^{*}, \hat{\bd{\theta}}_k)
=
-
\bd{\Sigma}^{-1}(\hat{\bd{\theta}}_{k}-\bd{\beta)}^{*} + o_{p}(1/n^{1/2})$.
Then,
\begin{equation}\label{eqn:Un:beta.star}
U_n(\bd{\beta}^{*}, \hat{\bd{\theta}})
=
-
\begin{pmatrix}
\bd{\Sigma}^{-1}(\hat{\bd{\theta}}_{1}-\bd{\beta}^{*})\\
\vdots \\
\bd{\Sigma}^{-1}(\hat{\bd{\theta}}_{K}-\bd{\beta}^{*})
\end{pmatrix} + o_p(1/n^{1/2}).
\end{equation}
Hence,
by (\ref{eqn:aseq2}), (\ref{eqn:Gn:beta.hat}), (\ref{eqn:Gn:beta.bar}), (\ref{eqn:Un:beta.star})
and Slutsky's theorem,
\begin{equation}\label{eqn:beta.hat.opt.minus.beta.star}
\hat{\bd{\beta}}_{\text{opt}} - \bd{\beta}^*
= \Big ( \sum_{k=1}^K{c_k}\Big)^{-1}\Big\{\sum_{k = 1}^K{c_k}(\hat{\bd{\theta}}_k- \bd{\beta}^*)\Big\}  + o_p(1/n^{1/2}).\\
\end{equation}
On the other hand,
\begin{align}\label{eqn:beta.meta.minus.beta.star}
\hat{\bd{\beta}}_{\text{meta}} - \bd{\beta}^*
&=
\Big\{\sum_{k=1}^K\Big(\frac{\hat{\bd{\Sigma}}_k}{n_k}\Big)^{-1}\Big \}^{-1}
\Big \{ \sum_{k = 1}^K\Big(\frac{\hat{\bd{\Sigma}}_k}{n_k}\Big)^{-1}\hat{\bd{\theta}}_k\Big \} - \bd{\beta}^* \nonumber \\
&=
\Big(\sum_{k=1}^K c_{k}\Big)^{-1}
\Big \{ \sum_{k = 1}^K c_{k}(\hat{\bd{\theta}}_k- \bd{\beta}^*) \Big \} + o_{p}(1/n^{1/2}).
\end{align}
Therefore, by (\ref{eqn:beta.hat.opt.minus.beta.star}) and (\ref{eqn:beta.meta.minus.beta.star}),
$\hat{\bd{\beta}}_{\text{opt}} = \hat{\bd{\beta}}_{\text{meta}} + o_p(1/n^{1/2})$.

\subsection{Newton-Raphson's Method and Iteratively Reweighted Least Squares Algorithm}
In this section we provide a derivation of the Newton-Raphson's method for GENMETA with generalized linear models.
As in Section 2.3,
we assume that the maximal and reduced models belong to the class of GLM \citep{mccullagh}.
Specifically,
assume the densities of $Y\mid\bd{X}$ and $Y\mid\bd{X}_{A_{k}}$ are of the forms
$$f(y\mid x; \bd{\beta}, \phi)=
\exp(\{1/a(\phi)\}(yh(\bd{x}^{T}\bd{\beta}) - b\{h(\bd{x}^{T}\bd{\beta})\}) + c(y;\phi)),$$
and
$$g_{k}(y\mid x_{A_k};\bd{\theta}_k)=
\exp(\{1/a(\phi_{k})\}(yh(\bd{x}_{A_{k}}^{T}\bd{\theta}_{k}) - b\{h(\bd{x}_{A_{k}}^{T}\bd{\theta}_{k})\}) + c(y;\phi_{k})),$$
respectively,
where $a(\cdot)$, $b(\cdot)$ and $c(\cdot)$ are known functions,
$h(\cdot)=b'^{-1}(g^{-1}(\cdot))$,
$g$ is a monotone and differentiable link function,
and $\phi$ and $\phi_{k}$ are the dispersion parameters
of the maximal and the $k$th reduced models, respectively.
Recall that we assume the maximal and the reduced models have the same link function $g$.
However, both the GENMETA and the Newton-Raphson's method are flexible to allow the maximal and the reduced models to have different link functions.
We also assume $\vec{X} = \cup_{k=1}^K\vec{X}_{A_k}$,
where the vectors of the covariates are viewed as sets without confusion.
Denote the dimensions of $\bd{\theta}_{k}$ and $\bd{\beta}$
as $d_k$ and $p$, respectively.
Assume $d=\sum_{k=1}^{K}d_k\geq p$
since the parameters of the maximal model will not be identifiable if $d<p$.

\subsubsection{Case I : $\phi$ and $\phi_{k}$'s are known.}

The log-likelihood of $g_{k}$ is
\begin{equation*}
l_{k}(y\mid\bd{x}_{A_{k}}; \bd{\theta}_{k})
=
\{1/a(\phi_{k})\}(yh(\bd{x}_{A_{k}}^{T}\bd{\theta}_{k}) - b\{h(\bd{x}_{A_{k}}^{T}\bd{\theta}_{k})\})+ c(y;\phi_{k}).
\end{equation*}
Then, the score function is
\begin{equation*}
\bd{s}_{k}(y\mid\bd{x}_{A_{k}}; \bd{\theta}_{k})
= \{1/a(\phi_{k})\}
\{y-g^{-1}(\bd{x}_{A_k}^T\bd{\theta}_k)\}h'(\bd{x}_{A_k}^T\bd{\theta}_k)\bd{x}_{A_k}.
\end{equation*}
Then,
\begin{equation*}
\bd{u}_{k}(\bd{x}; \bd{\beta}, \bd{\theta}_{k})
= E_{Y\mid X}\bd{s}_{k}\{(y\mid\bd{x}_{A_{k}}; \bd{\theta}_{k})\}
= \{1/a(\phi_{k})\}
\{g^{-1}(\bd{x}^{T}\bd{\beta})-g^{-1}(\bd{x}_{A_k}^T\bd{\theta}_k)\}h'(\bd{x}_{A_k}^T\bd{\theta}_k)\bd{x}_{A_k}.
\end{equation*}
Thus, the vector of empirical moment functions for $\bd{\beta}$ is
\begin{equation*}
U_n(\bd{\beta}) =
P_n
\begin{pmatrix}
\bd{u}_{k}(\bd{X}; \bd{\beta}, \hat{\bd{\theta}}_{k}) \\
\bd{u}_{k}(\bd{X}; \bd{\beta}, \hat{\bd{\theta}}_{k}) \\
\vdots \\
\bd{u}_{k}(\bd{X}; \bd{\beta}, \hat{\bd{\theta}}_{k})
\end{pmatrix},
\end{equation*}
where $P_n$ is the empirical measure with respect to the reference sample.

Let $Q_n(\bd{\beta}) = U^T_n(\bd{\beta})CU_n(\bd{\beta})$
where $\vec{C}$ is a $d \times d$ positive definite matrix.
The goal is to find the minimizer of $Q_n(\bd{\beta})$.
Its equivalent to solving the equation
\begin{equation*}
D_n(\bd{\beta}) = 0,
\end{equation*}
where $D_n(\bd{\beta})
= \bd{G}_{n}^{T}(\bd{\beta})\bd{C}U_n(\bd{\beta})$ and
$ \bd{G}_{n}(\bd{\beta})
= {\partial}U_n(\bd{\beta})/{\partial \bd{\beta}}$
is a $d \times p$ matrix.
Then, the $t$th iteration step for the Newton-Raphson's method is
\begin{equation}\label{eqn:NR}
\bd{\beta}^{(t+1)} = \bd{\beta}^{(t)} - J_n(\bd{\beta}^{(t)})^{-1} D_n(\bd{\beta}^{(t)}),
\end{equation}
where $J_n(\bd{\beta}) ={\partial}D_n(\bd{\beta})/{\partial \bd{\beta}}$
is a $p \times p$ matrix.\\

Next, we write $D_n(\bd{\beta})$ in a matrix form.
The matrix form of $G_n(\bd{\beta})$ is
\begin{equation*}
\begin{split}
G_n(\bd{\beta})
& =
P_n
\begin{pmatrix}
[a(\phi_{1})g'\{g^{-1}(\bd{X}^{T}\bd{\beta})\}]^{-1}h'(\bd{X}_{A_1}^{T}\hat{\bd{\theta}}_1)\bd{X}_{A_1}\bd{X}^{T} \\
\vdots \\
[a(\phi_{K})g'\{g^{-1}(\bd{X}^{T}\bd{\beta})\}]^{-1}h'(\bd{X}_{A_K}^{T}\hat{\bd{\theta}}_K)\bd{X}_{A_K}\bd{X}^{T}
\end{pmatrix}
= (1/n)X^T_{A_{diag}}WX_{rbind},
\end{split}
\end{equation*}
where $X_{\text{rbind}} = \bd{1} \otimes X$ and $X_{(n\times p)}$ is the reference data matrix;
$X_{A_{\text{diag}}} = \text{diag}(X_{A_1}, \hdots, X_{A_K})$
and $X_{A_k (n\times d_{k})}$ is the reference data matrix for the $k$th study;
$W = \diag(\bd{W}_{1}, \ldots, \bd{W}_{K})$,
$\bd{W}_{k} = \text{diag}(w_{k1},\dots,w_{kn})$,
$w_{ki}
=
[a(\phi_{k})g'\{g^{-1}(\bd{X}_{i}^{T}\bd{\beta})\}]^{-1}h'(\bd{X}_{A_k,i}^{T}\hat{\bd{\theta}}_k)$ for $k = 1,\dots,K$, $i = 1,\dots,n$ and $i$,
and
$\bd{X}_{i}^{T}$ and $\bd{X}_{A_k,i}^{T}$
are the $i$th rows of $X$ and $X_{A_k}$, respectively.
Similarly, the matrix form of $U_n(\bd{\beta})$ is
$U_n(\bd{\beta}) =
(1/n)X^T_{A_{diag}}r$,
where
$r = (r_1, \ldots, r_K)^T$,
$r_k = (r_{k1}, \ldots, r_{kn})^T$
and
$r_{ki} =
\{1/a(\phi_{k})\}\{g^{-1}(X_i^T\bd{\beta}) - g^{-1}(X^T_{A_k,i}\hat{\bd{\theta}}_{A_k,i})\}
h'(X^T_{A_k,i}\hat{\bd{\theta}}_{A_k,i})$ for each $k$ and $i$.
Thus, the matrix form of $D_n(\bd{\beta})$ is
\begin{equation}\label{eqn:Dn}
\begin{split}
D_n(\bd{\beta}) & = (1/n^{2})X^T_{rbind}WX_{A_{diag}}CX^T_{A_{diag}}\bd{r}.
\end{split}
\end{equation}

Next, we write $J_n(\bd{\beta})$ in a matrix form.
Let $G_n(\bd{\beta})$ be partitioned by columns
as $G_n(\bd{\beta}) = (G_{n,1}(\bd{\beta}), \dots, G_{n,p}(\bd{\beta}))$,
where $G_{n,j}(\bd{\beta})$ is a $d \times 1$ column vector for $j = 1, \ldots, p$.
Then,
\begin{equation}\label{eqn:Jn1}
\begin{split}
J_n(\bd{\beta})
& = \frac{\partial }{\partial \bd{\beta}} D_n(\bd{\beta})
= \frac{\partial }{\partial \bd{\beta}} G_n^T(\bd{\beta})CU_n(\bd{\beta}) \\
& = \begin{pmatrix}
\frac{\partial }{\partial \bd{\beta}}G_{n,1}^T(\bd{\beta})CU_n(\bd{\beta})\\
\vdots \\
\frac{\partial }{\partial \bd{\beta}}G_{n,p}^T(\bd{\beta})CU_n(\bd{\beta})\\
\end{pmatrix}
=
G_{n}^{T}(\bd{\beta})CG_n(\bd{\beta})
+
\begin{pmatrix}
U_n^T(\bd{\beta})C\frac{\partial  }{\partial \bd\beta}G_{n,1}(\bd{\beta})\\
\vdots \\
U_n^T(\bd{\beta})C\frac{\partial  }{\partial \bd\beta}G_{n,p}(\bd{\beta})
\end{pmatrix}.
\end{split}
\end{equation}
Then, the matrix form of the first summand
is $(1/n^{2})X_{rbind}^TWX_{A_{diag}}CX^T_{A_{diag}}WX_{rbind}$.
The $j$th row of the second summand is $r^TX_{A_{diag}}C{\partial }G_{n,j}(\bd{\beta})/{\partial \bd{\beta}}$.
Note that
\begin{equation*}
\frac{\partial }{\partial \bd \beta}G_{n,j}(\bd{\beta})
= (1/n)X^T_{A_{diag}} L
   X^*_{j_{diag}}X_{rbind},
\end{equation*}
where
$L = \diag(L_1,\dots,L_K)$,
$L_k = \diag(l_{k1},\dots,l_{kn})$
and, for each $k$ and $i$,
$$l_{ki}
=
-g''\{g^{-1}(\bd{X}_{i}^{T}\bd{\beta})\}/(a(\phi_{k})[g'\{g^{-1}(\bd{X}_{i}^{T}\bd{\beta})\}]^{3}
h'(X^T_{A_k,i}\hat{\bd{\theta}}_{k}));$$
$X^*_{j_{diag}} = \diag(X_{j_{diag}}, \ldots, X_{j_{diag}})$
with $K$ diagonal blocks and
$X_{j_{diag}} = \diag(X_{1j},\ldots,X_{nj})$ for $ j = 1, \ldots, p$.
Then, for each $j$,
the matrix form of
$U_n^T(\bd{\beta})C{\partial}G_{n,j}(\bd{\beta})/{\partial \bd\beta}$ is
$$(1/n^2)r^TX_{A_{diag}}CX^T_{A_{diag}}LX^*_{j_{diag}}X_{rbind}.$$
Then, the second summand of (\ref{eqn:Jn1}) can be rewritten as
$(1/n^2)X_{rbind}^TVX_{rbind}$,
where $V = diag(v_1,\dots,v_{nK})$ and $v_i$ is the $i$th element of the row vector $r^TX_{A_{diag}}CX^T_{A_{diag}}L$.
Thus,
\begin{equation}\label{eqn:Jn2}
J_n(\bd{\beta})
= (1/n^2)X_{rbind}^T(WX_{A_{diag}}CX^T_{A_{diag}}W \ + \ V)X_{rbind}
= (1/n^2)X_{\text{rbind}}^TW^*X_{\text{rbind}}.
\end{equation}
where $W^* = WX_{A_{diag}}C X^T_{A_{diag}}W+V $.\\

Therefore, plugging (\ref{eqn:Dn}) and (\ref{eqn:Jn2}) in (\ref{eqn:NR}),
we get the following $t$th iteration step
\begin{equation*}
\bd \beta^{(t+1)}  = \bd \beta^{(t)} - (X_{\text{rbind}}^TW^*X_{\text{rbind}})^{-1}X^T_{\text{rbind}}
WX_{A_{\text{diag}}}CX^T_{A_{\text{diag}}}r,
\end{equation*}
which can be seen as the $t$th step of an iteratively reweighted least squares algorithm.

\subsubsection{Case II : $\phi$ and $\phi_{k}$'s are unknown.}
When $\phi$ and $\phi_{k}$'s are unknown,
we propose to first obtain the GENMETA estimator $\hat{\bd{\beta}}$ of $\bd{\beta}^{\star}$ as above
with $\phi_{k}'s$ replaced by $\hat{\phi}_{k}$'s.
Next, let us consider the estimation of $\phi^{\star}$, the true value of $\phi$.
For the $k$th reduced model,
we have an additional score function with respect to $\phi_{k}$,
which is
\begin{equation*}
s_{k}(y\mid\bd{x}_{A_{k}}; \bd{\theta}_{k}, \phi_k)
= -\frac{a^\prime(\phi_k)}{a^2(\phi_k)}
(yh(\bd{x}_{A_{k}}^{T}\bd{\theta}_{k}) - b\{h(\bd{x}_{A_{k}}^{T}\bd{\theta}_{k})\}) + c^\prime(y;\phi_k),
\end{equation*}
where $c'(y;\phi_k)$ is the derivative of $c(y;\phi_k)$ with respect to $\phi_{k}$.
Then,
we obtain
\begin{equation*}
u_{k}(X;\bd{\beta}, \phi, \bd{\theta}_{k}, \phi_{k})=
-\frac{a^\prime(\phi_k)}{a^2(\phi_k)}
(g^{-1}(X^T\bd{\beta})h(\bd{X}_{A_{k}}^{T}\bd{\theta}_{k}) - b\{h(\bd{X}_{A_{k}}^{T}\bd{\theta}_{k})\})
+ q_{k}(X; \bd{\beta}, \phi, \phi_k),
\end{equation*}
where $q_{k} = E_{Y\mid\bd{X}}(c'(Y,\phi_k))$.
The distribution of $Y\mid X$ depends on $\bd{\beta}$ and $\phi$
so that $q_{k}$ also depends on them.
Then,
the empirical moment vector for $\phi$ is
$$U_n(\phi) =
P_n
(
u_{1}(X; \hat{\bd{\beta}}, \phi, \hat{\bd{\theta}}_{1}, \hat{\phi}_{1})^{T},
\ldots,
u_{K}(X; \hat{\bd{\beta}}, \phi, \hat{\bd{\theta}}_{K}, \hat{\phi}_{K})^{T}
)^{T}.$$
We propose to estimate $\phi^{\star}$ in the GMM framework.
Thus, we need to compute the minimizer of $U_n(\phi)^{T}\bd{C} U_n(\phi)$,
where $C$ is a known weighting matrix.
As before, we use the Newton-Raphson's method and
it can be written as
\begin{equation} \label{eqn:IRLS iter dip}
\phi^{(t+1)}
= \phi^{(t)} - J_n^{-1}(\phi^{(t)})
D_{n}(\phi^{(t)}),
\end{equation}
where
$$J_n(\phi) = U^T_n(\phi)C\frac{d^2 }{d\phi^2} q_n(\phi)
+ (\frac{d}{d\phi}q_n(\phi))^TC\frac{d}{d\phi}q_n(\phi),$$
$D_{n}(\phi)=U^T_n(\phi^{(t)})C{d}q_n(\phi)/{d\phi}$
and $\vec{q}_n(\phi) =
P_n
(
q_{1}(X; \hat{\bd{\beta}}, \phi, \hat{\phi}_1), \ldots, q_{K}(X; \hat{\bd{\beta}}, \phi, \hat{\phi}_K)
)^T$. \\

Thus, when $\phi$ and $\phi_{k}$'s are unknown,
we first choose initial estimates $\bd \beta^{(0)}$ and $\phi^{(0)}$.
Then, we get the GENMETA estimator $\hat{\bd \beta}$ by using equation (\ref{eqn:NR}) until a stopping rule is reached.
Subsequently, $\phi^{(0)}$, $\hat{\bd \beta}$ and the study estimates are plugged in equation (\ref{eqn:IRLS iter dip})
and the process is repeated until a stopping rule is reached to get the GENMETA estimator of $\phi^{*}$.
In each Newton-Raphson's step, the weighting matrix $\bd{C}$ is estimated by the estimates from the previous step.\\

If the estimates of the study dispersion parameters, $\phi_k$'s, are not provided directly,
but the the outcomes are standardized ($\text{var}(Y) = 1$),
we can obtain them through the following relation based on conditional variance formula
\begin{equation*}
a(\hat{\phi}_k)
=
\frac{1 - (P_n g^{-1}(X^T_{A_k}\hat{\bd{\theta}}_k)^{2} - \{P_n g^{-1}(X^T_{A_k}\hat{\bd{\theta}}_k)\}^{2})}
{P_n b''\{h(X^T_{A_k}\hat{\bd{\theta}}_k)\}},
\end{equation*}
where $h(\cdot)=b'^{-1}(g^{-1}(\cdot))$
and $P_n$ is the empirical measure with the reference data.
For normal family where the canonical link is an identity function,
we have $b^{\prime \prime}(\psi) = 1$,
which implies the denominator is 1.

\subsection{Full Proof of Theorem 1 and
Checking Regularity Assumptions in Two Examples}

We first provide a complete proof of Theorem 1
and then check the assumptions for logistic and linear regression models.
\begin{proof}[Proof of Theorem 1]
  First, we show the consistency of $\hat{\bd{\beta}}$.
Denote
$\hat{\bd{\theta}}$ and $\bd{\theta}^{*}$
as stacked vectors of
$\hat{\bd{\theta}}_{k}$'s and $\bd{\theta}_{k}^{*}$'s, respectively.
Denote
$\bd{U}_{0}(\bd{\beta},\bd{\theta})=E (\bd{U}(\bd{X};\bd{\beta},\bd{\theta}))$
and $\bd{Q}_{0}(\bd{\beta})=\bd{U}_{0}(\bd{\beta},\bd{\theta}^{*})^{T}\bd{C} \bd{U}_{0}(\bd{\beta},\bd{\theta}^{*})$.

By (A1) and Lemma 2.3 of \citep{Newey1994},
$\bd{Q}_{0}(\bd{\beta})$ is uniquely minimized at $\bd{\beta}^{*}$.

By (A2), (A3), (A4) and Lemma 2.4 of \citep{Newey1994},
$\bd{U}_{0}(\bd{\beta},\bd{\theta})$ is continuous
and $\bd{U}_{n}(\bd{\beta}, \bd{\theta})$ converges uniformly to $\bd{U}_{0}(\bd{\beta},\bd{\theta})$
for $(\bd{\beta},\bd{\theta}) \in D_{\bd{\beta}} \times N_{c}(\bd{\theta}^{*})$,
where $N_{c}(\bd{\theta}^{*})$ is a compact subset of $N(\bd{\theta}^{*})$
including $\bd{\theta}^{*}$.
Note that $\hat{\bd{\theta}}$ is a consistent estimator of $\bd{\theta}^{*}$.
With probability going to one (wpg1),
$$\sup_{\bd{\beta}\in D_{\bd{\beta}}}||\bd{U}_{n}(\bd{\beta},\hat{\bd{\theta}})-\bd{U}_{0}(\bd{\beta},\hat{\bd{\theta}})||
\leq \sup_{(\bd{\beta},\bd{\theta}) \in D_{\bd{\beta}} \times N_{c}(\bd{\theta}^{*})}||\bd{U}_{n}(\bd{\beta},\bd{\theta})-\bd{U}_{0}(\bd{\beta},\bd{\theta})||.$$
Then, $\bd{U}_{n}(\bd{\beta},\hat{\bd{\theta}})-\bd{U}_{0}(\bd{\beta},\hat{\bd{\theta}})$
converges uniformly in probability
to 0 for $\bd{\beta}\in D_{\bd{\beta}}$.

For any $r>0$, wpg1,
$$
\sup_{\bd{\beta} \in D_{\bd{\beta}}}||\bd{U}_{0}(\bd{\beta},\hat{\bd{\theta}})-\bd{U}_{0}(\bd{\beta},\bd{\theta}^{*})||
\leq
\sup_{\bd{\beta} \in D_{\bd{\beta}}}E   (\sup_{||\bd{\theta}-\bd{\theta}^{*}||<r}||\bd{U}(\bd{\beta},\bd{\theta})-\bd{U}(\bd{\beta},\bd{\theta}^{*})||).
$$
By (A3), (A4) and dominant convergence theorem,
$E (\sup_{||\bd{\theta}-\bd{\theta}^{*}||<r}||\bd{U}(\bd{\beta},\bd{\theta})-\bd{U}(\bd{\beta},\bd{\theta}^{*})||)$
converges to 0 for every $\bd{\beta}\in D_{\bd{\beta}}$ as $r$ decreases to 0.
Note that
$E (\sup_{||\bd{\theta}-\bd{\theta}^{*}||<r}||\bd{U}(\bd{\beta},\bd{\theta})-\bd{U}(\bd{\beta},\bd{\theta}^{*})||)$
decreases as $r$ decreases for each $\bd{\beta}$.
By (A2) and Dini's theorem (see, for example, Theorem 7.13 of \citep{Rudin1976}),
$E (\sup_{||\bd{\theta}-\bd{\theta}^{*}||<r}||\bd{U}(\bd{\beta},\bd{\theta})-\bd{U}(\bd{\beta},\bd{\theta}^{*})||)$
converges uniformly in probability
to 0 for $\bd{\beta}\in D_{\bd{\beta}}$ as $r$ decreases to 0.
Then,
$\bd{U}_{0}(\bd{\beta},\hat{\bd{\theta}})-\bd{U}_{0}(\bd{\beta},\bd{\theta}^{*})$
converges uniformly in probability
to 0 for $\bd{\beta}\in D_{\bd{\beta}}$.

By combining the above two results,
it follows that
$\bd{U}_{n}(\bd{\beta},\hat{\bd{\theta}})$ converges uniformly in probability
to $\bd{U}_{0}(\bd{\beta},\bd{\theta}^{*})$ for $\bd{\beta}\in D_{\bd{\beta}}$.

By the triangle and Cauchy-Schwartz inequalities,
\begin{align*}
    \sup_{\bd{\beta}\in D_{\bd{\beta}}}|\bd{Q}_{n}(\bd{\beta}) - \bd{Q}_{0}(\bd{\beta})|
    & \leq ||\hat{\bd{C}}||\sup_{\bd{\beta}\in D_{\bd{\beta}}}||\bd{U}_{n}(\bd{\beta}, \hat{\bd{\theta}})-\bd{U}_{0}(\bd{\beta},\bd{\theta}^{*})||^{2} \\
    & + 2||\hat{\bd{C}}||\sup_{\bd{\beta}\in D_{\bd{\beta}}}||\bd{U}_{0}(\bd{\beta},\bd{\theta}^{*})||
\sup_{\bd{\beta}\in D_{\bd{\beta}}}||\bd{U}_{n}(\bd{\beta}, \hat{\bd{\theta}})-\bd{U}_{0}(\bd{\beta},\bd{\theta}^{*})|| \\
    & + ||\hat{\bd{C}}-\bd{C}||\sup_{\bd{\beta}\in D_{\bd{\beta}}}||\bd{U}_{0}(\bd{\beta},\bd{\theta}^{*})||^{2}
\end{align*}
Since $\hat{\bd{C}}$ is a consistent estimator of $\bd{C}$,
$||\hat{\bd{C}}||$ converges in probability to $||\bd{C}||$,
which is finite; $||\hat{\bd{C}}-\bd{C}||$ converges in probability to 0.
Since $\bd{U}_{0}(\bd{\beta},\bd{\theta}^{*})$ is continuous for $\bd{\beta}\in D_{\bd{\beta}}$
and $D_{\bd{\beta}}$ is compact,
$\sup_{\bd{\beta}\in D_{\bd{\beta}}}||\bd{U}_{0}(\bd{\beta},\bd{\theta}^{*})||^{2}$ is finite.
Since $\sup_{\bd{\beta}\in D_{\bd{\beta}}}||\bd{U}_{n}(\bd{\beta}, \hat{\bd{\theta}})-\bd{U}_{0}(\bd{\beta},\bd{\theta}^{*})||$ converges in probability to 0,
$\sup_{\bd{\beta}\in D_{\bd{\beta}}}||\bd{U}_{n}(\bd{\beta}, \hat{\bd{\theta}})-\bd{U}_{0}(\bd{\beta},\bd{\theta}^{*})||^{2}$ converges in probability to 0.
Thus, $\bd{Q}_{n}(\bd{\beta})-\bd{Q}_{0}(\bd{\beta})$ converges uniformly in probability to 0 for $\bd{\beta}\in D_{\bd{\beta}}$.
Recall that $\bd{\beta}^{*}$ is the unique minimizer of $\bd{Q}_{0}(\bd{\beta})$.
By Theorem 2.1 of \citep{Newey1994},
$\hat{\bd{\beta}}$ is a consistent estimator of $\bd{\beta}^{*}$.\\

Next, we derive the asymptotic distribution of the GENMETA estimator $\hat{\bd{\beta}}$.
Note that $\hat{\bd{\beta}}$ is a solution to
$$\bd{G}_{n}(\bd{\beta}, \hat{\bd{\theta}})^{T}\hat{\bd{C}}\bd{U}_{n}(\bd{\beta}, \hat{\bd{\theta}})=0,$$
where
$\bd{G}_{n}(\bd{\beta}, \hat{\bd{\theta}})={\partial}\bd{U}_{n}(\bd{\beta}, \hat{\bd{\theta}})/{\partial \bd{\beta}}$,
the Jacobian of $\bd{U}_{n}(\bd{\beta}, \hat{\bd{\theta}})$.
On the other hand,
by mean value theorem,
$$\bd{U}_{n}(\hat{\bd{\beta}}, \hat{\bd{\theta}})=\bd{U}_{n}(\bd{\beta}^{*}, \hat{\bd{\theta}})
+ \bd{G}_{n}(\bar{\bd{\beta}}, \hat{\bd{\theta}})(\hat{\bd{\beta}}-\bd{\beta}^{*}),$$
where $\bar{\bd{\beta}}$ denotes a matrix
each column of which corresponds to each element of $\bd{U}_{n}(\bd{\beta}, \hat{\bd{\theta}})$.
After left multiplying $\bd{G}_{n}(\hat{\bd{\beta}}, \hat{\bd{\theta}})^{T}\hat{\bd{C}}$ to the above identity,
it follows
$$n^{1/2}(\hat{\bd{\beta}}-\bd{\beta}^{*})
=
-\bd{M}_{n}n^{1/2}\bd{U}_{n}(\bd{\beta}^{*}, \hat{\bd{\theta}}),$$
where
$\bd{M}_{n}=(\bd{G}_{n}(\hat{\bd{\beta}}, \hat{\bd{\theta}})^{T}\hat{\bd{C}}\bd{G}_{n}(\bar{\bd{\beta}}, \hat{\bd{\theta}}))^{-1}
\bd{G}_{n}(\hat{\bd{\beta}}, \hat{\bd{\theta}})^{T}\hat{\bd{C}}$.

Consider $\bd{M}_{n}$.
Since
$\hat{\bd{\beta}}$ is a consistent estimator of $\bd{\beta}^{*}$,
each column of $\bar{\bd{\beta}}$ is a consistent estimator of $\bd{\beta}^{*}$.
On the other hand, $\hat{\bd{\theta}}$ is a consistent estimator of $\bd{\theta}^{*}$.
By (A5), (A6) and Lemma 2.4 of \citep{Newey1994},
$\bd{G}_{n}(\bd{\beta}, \bd{\theta})$ converge uniformly to continuous
$E\{{\partial}\bd{U}(\bd{X}; \bd{\beta}, \bd{\theta})/{\partial \bd{\beta}}\}$
for $(\bd{\beta},\bd{\theta})\in D_{\bd{\beta}}\times N_{c}(\bd{\theta}^{*})$,
where $N_{c}(\bd{\theta}^{*})$ is a compact subset of $N(\bd{\theta}^{*})$,
including $\bd{\theta}^{*}$.
Since $\hat{\bd{\beta}}$ and each column of $\bar{\bd{\beta}}$ converge in probability to $\bd{\beta}^{*}$
and $\hat{\bd{\theta}}$ is a consistent estimator of $\bd{\theta}^{*}$,
by, for example, Theorem 9.4 of \citep{Keener2010},
both $\bd{G}_{n}(\hat{\bd{\beta}}, \hat{\bd{\theta}})$
and $\bd{G}_{n}(\bar{\bd{\beta}}, \hat{\bd{\theta}})$ converges in probability to
$\bd{\Gamma}=E\{{\partial}\bd{U}(\bd{X}; \bd{\beta}^{*}, \bd{\theta}^{*})/{\partial \bd{\beta}}\}$.
Thus, by noting $\hat{\bd{C}}{\rightarrow} \bd{C}$ in probability,
$\bd{M}_{n}$ converges in probability to
$(\bd{\Gamma}^{T}\bd{C}\bd{\Gamma})^{-1}\bd{\Gamma}^{T}\bd{C}$.

Consider $n^{1/2}\bd{U}_{n}(\bd{\beta}^{*}, \hat{\bd{\theta}})$.
By mean value theorem,
$$\bd{U}_{n}(\bd{\beta}^{*}, \hat{\bd{\theta}}) = \bd{U}_{n}(\bd{\beta}^{*}, \bd{\theta}^{*})
+ \bd{V}_{n}(\bd{\beta}^{*}, \bar{\bd{\theta}})(\hat{\bd{\theta}}-\bd{\theta}^{*}),$$
where $\bd{V}_{n}$ is the Jacobian of $\bd{U}_{n}(\bd{\beta}^{*}, \bd{\theta})$ as a function of $\bd{\theta}$ and
$\bar{\bd{\theta}}$ is a matrix
each column of which corresponds to each element of $\bd{U}_{n}(\bd{\beta}^{*}, \bd{\theta})$.
Thus,
$$n^{1/2}\bd{U}_{n}(\bd{\beta}^{*}, \hat{\bd{\theta}})
= n^{1/2}\bd{U}_{n}(\bd{\beta}^{*}, \bd{\theta}^{*})
+ \bd{V}_{n}(\bd{\beta}^{*}, \bar{\bd{\theta}})n^{1/2}(\hat{\bd{\theta}}-\bd{\theta}^{*}).$$
By (A9) and central limit theorem,
$n^{1/2}\bd{U}_{n}(\bd{\beta}^{*}, \bd{\theta}^{*})
\overset{d}{\rightarrow}N(0, \bd{\Delta})$.
Since $\hat{\bd{\theta}}$ is a consistent estimator of $\bd{\theta}^{*}$.
each column of $\bar{\bd{\theta}}$ converges in probability to $\bd{\theta}^{*}$.
Similar to the above argument,
by (A7), (A8), Lemma 2.4 of \citep{Newey1994} and Theorem 9.4 of \citep{Keener2010},
$$\bd{V}_{n}(\bd{\beta}^{*}, \bar{\bd{\theta}})
{\rightarrow}
\diag(\bd{W}_{1}, \bd{W}_{2}, \ldots, \bd{W}_{K}) \quad \text{in probability},$$
where, for $k=1,2,\ldots, K$,
$\bd{W}_{k} =  E \{{\partial}
u_{k}(\bd{X}, \bd{\beta}^{*}, \bd{\theta}_{k})/{\partial \bd{\theta}_{k}}\}
\mid_{\bd{\theta}_{k} = \bd{\theta}_{k}^{*}}$.
The $K$ study data sets are independent.
So are $\hat{\bd{\theta}}_{k}$'s.
Note that $n_{k}/n\rightarrow c_{k}$, where $c_{k}$ is a positive constant for $k=1,2,\ldots, K$.
Then
$n^{1/2}(\hat{\bd{\theta}}-\bd{\theta}^{*})$
converges in distribution to
$$N(0, \diag((1/c_{1})\bd{\Sigma}_{1}, (1/c_{2})\bd{\Sigma}_{2},\ldots, (1/c_{K})\bd{\Sigma}_{K})).$$
Since the $K$ data sets
and the reference data are independent, the above results imply that
$n^{1/2}\bd{U}_{n}(\bd{\beta}^{*}, \hat{\bd{\theta}})$
converges in distribution to
$N(0, \bd{\Delta} + \bd{\Lambda})$,
where
$\bd{\Lambda}$ is a block diagonal matrix whose $k$th
block is
$(1/c_{k})\bd{W}_{k}\bd{\Sigma}_{k}\bd{W}_{k}^{T}$
for $k=1,\ldots,K$.

Therefore, with the above two results on $\bd{M}_{n}$ and $n^{1/2}\bd{U}_{n}(\bd{\beta}^{*}, \hat{\bd{\theta}})$ and by Slutsky's theorem,
the asymptotic normality of $n^{1/2}(\hat{\bd{\beta}}-\bd{\beta}^{*})$ follows.
\end{proof}


\begin{example}[Check Assumptions for Logistic Regression Model]
Suppose the maximal model is
$$
Y\mid\bd{X} \sim \text{Bernoulli}\Big\{\frac{1}{1+\exp(-X^{T}\bd{\beta}^{*})}\Big\},
$$
where $X=(1,\bd{X}^{T})^{T}$,
$\bd{X}=({X}_{1}, \ldots, {X}_{d})^{T}$
is the vector of covariates
and
$\bd{\beta}^{*}
=({\beta}_{0}^{*}, {\beta}_{1}^{*}, \ldots,{\beta}_{p}^{*})^{T}$
is the vector of coefficients of interest.
There are $K$ independent studies and
the reduced model of the $k$th study is
$$
Y\mid\bd{X}_{A_{k}}\sim
\text{Bernoulli}\Big\{\frac{1}{1+\exp(-X_{A_{k}}^{T}\bd{\theta}_{k})}\Big\},$$
where $X_{A_{k}}=(1,\bd{X}_{A_{k}}^{T})^{T}$,
$\bd{X}_{A_{k}}$ is a sub-vector of $\bd{X}$
with $A\subset\{1,2,\ldots, p\}$.
For example, $\bd{X}_{A}=({X}_{1}, {X}_{2})^{T}$ when $A=\{1,2\}$.

The global identification assumption (A1) usually holds
and $D_{\bd{\beta}}$ is a compact set.
Next, we check the assumptions (A3) to (A9).
The moment functions from the $k$th study is
$$
u_{k}(\bd{X}; \bd{\beta}, \bd{\theta}_{k})
=
\Big(
\frac{1}{1+e^{-X^{T}\bd{\beta}}}-
\frac{1}{1+e^{-X_{A_{k}}^{T}\bd{\theta}_{k}}}
\Big)
X_{A_{k}}.
$$
It is a continuous function of $\bd{\beta}$ and $\bd{\theta}_{k}$.
Then, (A3) is satisfied.
Note that
$$
\sup_{(\bd{\beta},\bd{\theta}) \in D_{\bd{\beta}} \times N(\bd{\theta}^{*})}
||
\Big (
\frac{1}{1+e^{-X^{T}\bd{\beta}}}-
\frac{1}{1+e^{-X_{A_{k}}^{T}\bd{\theta}_{k}}}
\Big )
X_{A_{k}}
||\leq 2||X||_{1},
$$
where $||\cdot||$ and $||\cdot||_{1}$ are the $l_{2}$ and $l_{1}$ norms, respectively.
Then, given $E(|{X}_{i}|)<\infty$ for each $i$,
(A4) is satisfied.
Also,
\begin{equation}\label{example:logistic:partial:beta}
\frac{\partial}{\partial \bd{\beta}}u_{k}(\bd{X};\bd{\beta}, \bd{\theta}_{k})
= \frac{e^{-X^{T}\bd{\beta}}}{(1+e^{-X^{T}\bd{\beta}})^{2}}X_{A_{k}}X^{T},
\end{equation}
which does not depend on $\bd{\theta}_{k}$ and is continuous for each $\bd{\beta}$.
Then, (A5) is verified.
Note that
$$
\sup_{(\bd{\beta},\bd{\theta}) \in D_{\bd{\beta}} \times N(\bd{\theta}^{*})}
||
\frac{e^{-X^{T}\bd{\beta}}}{(1+e^{-X^{T}\bd{\beta}})^{2}}X_{A_{k}}X^{T}
||
\leq
||XX^{T}||_{1}.
$$
Given $E ({X}_{i}^{2})<\infty$ for each $i$,
(A6) is satisfied.
Note that
$$
\frac{\partial}{\partial \bd{\theta}_{k}}u_{k}(\bd{X}; \bd{\beta}^{*}, \bd{\theta}_{k})
=
-\frac{e^{-X_{A_{k}}^{T}\bd{\theta}_{k}}}{(1+e^{-X_{A_{k}}^{T}\bd{\theta}_{k}})^{2}}X_{A_{k}}X_{A_{k}}^{T},
$$
which is continuous for each $\bd{\theta}_{k}$.
Then, (A7) is satisfied.
Note that
$$
\sup_{(\bd{\beta},\bd{\theta}) \in D_{\bd{\beta}} \times N(\bd{\theta}^{*})}
||
-\frac{e^{-X_{A_{k}}^{T}\bd{\theta}_{k}}}{(1+e^{-X_{A_{k}}^{T}\bd{\theta}_{k}})^{2}}X_{A_{k}}X_{A_{k}}^{T}
||
\leq
||XX^{T}||_{1}.
$$
Given $E ({X}_{i}^{2})<\infty$ for each $i$,
(A8) is satisfied.
The absolute value of each element of $\bd{\Delta}(\bd{\beta}^{*}, \bd{\theta}^{*})$ is less than 1, $E(|{X}_{i}|)$ or $E(|{X}_{i}{X}_{j}|)$
for each $i$ and $j$. Given $E ({X}_{i}^{2})<\infty$, $\bd{\Delta}(\bd{\beta}^{*}, \bd{\theta}^{*})$ is finite.
Note that $\bd{\Gamma}(\bd{\beta}^{*}, \bd{\theta}_{k}^{*})$
is a stacked matrix of (\ref{example:logistic:partial:beta}) for $k=1,\ldots,K$.
Given each covariate of the maximal model is in at least one reduced model
and $E[\{e^{-X^{T}\bd{\beta}}/(1+e^{-X^{T}\bd{\beta}})^{2}\}XX^{T}]$ is positive definite,
$\bd{\Gamma}(\bd{\beta}^{*}, \bd{\theta}^{*})$ is of full rank. Then, (A9) is verified. \qed
\end{example}

\begin{example}[Check Assumptions for Linear Regression Model]
\normalfont
Suppose the true maximal model is
$$Y\mid\bd{X} \sim N(\bd{X}^{T}\bd{\beta}^{*},\sigma^{* 2}),$$
where $\bd{X}=(\bd{X}_{1}, \bd{X}_{2}, \ldots, \bd{X}_{p})^{T}$;
$\bd{\beta}^{*}=(
\bd{\beta}_{1}^{*},
\bd{\beta}_{2}^{*},
\ldots,
\bd{\beta}_{p}^{*})^{T}$;
$E (\bd{X}) = 0$ and $E (Y) = 0$,
that is, both $\bd{X}$ and $Y$ are centered.
There are $K$ independent studies and
the reduced model of the $k$th study is
$$Y\mid\bd{X}_{A_{k}}\sim N(\bd{X}_{A_{k}}^{T}\bd{\theta}_{k}, \sigma_{k}^{2}).$$
For simplicity, assume $\sigma^{* 2}$ is known
and the unknown parameter is $\bd{\beta}^{*}$.
The case with unknown $\sigma^{* 2}$ can be similarly considered.

The moment functions from the $k$th reduced model is
$$u_{k}(\bd{X}; \bd{\beta}; \bd{\theta}_{k}, \sigma_{k}^{2})
= \frac{1}{\sigma_{k}^{2}}(\bd{X}_{A_{k}}\bd{X}^{T}\bd{\beta}  - \bd{X}_{A_{k}}\bd{X}_{A_{k}}^{T}\bd{\theta}_{k}),
$$
which is linear in $\bd{\beta}$.
Note that
\begin{equation}\label{example:linear:partial:beta}
\frac{\partial}{\partial \bd{\beta}}u_{k}(\bd{X}; \bd{\beta}; \bd{\theta}_{k}, \sigma_{k}^{2})
= \frac{1}{\sigma_{k}^{2}}\bd{X}_{A_{k}}\bd{X}^{T}.
\end{equation}
Given each covariate of the maximal model is in at least one reduced model
and $E ({\bd{X}}{\bd{X}}^{T})$ is positive definite,
$\bd{\Gamma}(\bd{\beta}^{*}, \{\bd{\theta}_{k}^{*}\}, \{\sigma_{k}^{* 2}\})
={\partial}u_{k}(\bd{X}; \bd{\beta}^{*}; \{\bd{\theta}_{k}^{*}\}, \{\sigma_{k}^{* 2}\})/{\partial \bd{\beta}}$ is of full rank.
Given $\bd{C}$ is positive definite, (A1) is satisfied.
Suppose $D_{\bd{\beta}}$ is a compact set. Then, (A2) is satisfied.

Next, we check the assumptions (A3) to (A9).
Note that $u_{k}(\bd{X}; \bd{\beta}; \bd{\theta}_{k}, \sigma_{k}^{2})$ is continuous for every $(\bd{\beta}, \bd{\theta}_{k}, \sigma_{k}^{2})$.
Then, (A3) is satisfied.
Note that
$$
\sup_{(\bd{\beta}, \bd{\theta}_{k}, \sigma_{k}^{2})}
||
\frac{1}{\sigma_{k}^{2}}(\bd{X}_{A_{k}}\bd{X}^{T}\bd{\beta}  - \bd{X}_{A_{k}}\bd{X}_{A_{k}}^{T}\bd{\theta}_{k})
||\leq \frac{1}{\sigma_{k}^{2}}(||\bd{\beta}||+||\bd{\theta}_{k}||)||\bd{X}\bd{X}^{T}||_{1},
$$
Denote a finite upper bound of $||\bd{\beta}||$ for $\bd{\beta}\in D_{\bd{\beta}}$ as $\bd{C}(\bd{\beta})$,
a finite upper bound of $||\bd{\theta}_{k}||$ for $\bd{\theta}_{k}\in N(\bd{\theta}_{k}^{*})$ as $\bd{C}(\bd{\theta}_{k})$,
and a positive finite lower bound of $\sigma_{k}^{2}$ for $\sigma_{k}^{2}\in N(\bd{\theta}_{k}^{*})$ as $\sigma_{L}^{2}$.
The supremum of $(1/\sigma_{k}^{2})(||\bd{\beta}||+||\bd{\theta}_{k}||)$
for $(\bd{\beta}, \bd{\theta}_{k}, \sigma_{k}^{2}) \in D_{\bd{\beta}}\times N(\bd{\theta}_{k}^{*})\times N(\sigma_{k}^{* 2})$
is bounded by $(1/\sigma_{L}^{2})(\bd{C}(\bd{\beta})+\bd{C}(\bd{\theta}_{k}))$.
Given $E ({X}_{i}^{2})<\infty$ for each $i$,
(A4) is satisfied.
Note that
$
{\partial}u_{k}(\bd{X}; \bd{\beta}; \bd{\theta}_{k}, \sigma_{k}^{2})/{\partial \bd{\beta}}
$
does not depend on $\bd{\beta}$ and $\bd{\theta}_{k}$
and is continuous for each $\sigma_{k}^{2}$.
Then, (A5) is satisfied.
Note that
$$
\sup_{\sigma_{k}^{2}\in N(\sigma_{k}^{* 2})}
||
\frac{1}{\sigma_{k}^{2}}\bd{X}_{A_{k}}\bd{X}^{T}
||
\leq \frac{1}{\sigma_{L}^{2}}||\bd{X}\bd{X}^{T}||_{1}.
$$
Given $E ({X}_{i}^{2})<\infty$ for each $i$,
(A6) is satisfied.
Note that
$$
\frac{\partial}{\partial (\bd{\theta}_{k}, \sigma_{k}^{2})}u_{k}(\bd{X}; \bd{\beta}; \bd{\theta}_{k}, \sigma_{k}^{2})
=
\{
-\frac{1}{\sigma_{k}^{2}}\bd{X}_{A_{k}}\bd{X}_{A_{k}}^{T},
-\frac{1}{\sigma_{k}^{4}}(\bd{X}_{A_{k}}\bd{X}^{T}\bd{\beta}  - \bd{X}_{A_{k}}\bd{X}_{A_{k}}^{T}\bd{\theta}_{k})
\},
$$
which is continuous for every $(\bd{\beta}, \bd{\theta}_{k}, \sigma_{k}^{2})$.
Then, (A7) is satisfied.
For every $(\bd{\beta}, \bd{\theta}_{k}, \sigma_{k}^{2})\in D_{\bd{\beta}}\times N(\bd{\theta}_{k}^{*}, N(\sigma_{k}^{* 2}))$,
the $l_{2}$ norm of the above partial derivative is less than or equal to
$$\frac{1}{\sigma_{L}^{2}}+\frac{1}{\sigma_{L}^{4}}(\bd{C}(\bd{\beta})+\bd{C}(\bd{\theta}_{k}))||\bd{X}\bd{X}^{T}||_{1}.$$
Given $E ({X}_{i}^{2})<\infty$ for each $i$,
(A8) is satisfied.
Each element of
$\bd{\Delta}(\bd{\beta}^{*}, \{\bd{\theta}_{k}^{*}\}, \{\sigma_{k}^{* 2}\})$
is equal to a constant times $E ({X}_{i_{1}}{X}_{i_{2}}{X}_{i_{3}}{X}_{i_{4}})$
for some $i_{1}, i_{2}, i_{3}, i_{4}$.
Given $E ({X}_{i}^{4})<\infty$ for each $i$,
$\bd{\Delta}$ is finite.
Note that $\bd{\Gamma}(\bd{\beta}^{*}, \{\bd{\theta}_{k}^{*}\}, \{\sigma_{k}^{* 2}\})$
is a stacked matrix of (\ref{example:linear:partial:beta}) for $k=1,\ldots,K$.
As in checking (A2),
given each covariate of the maximal model is in at least one reduced model
and $E ({\bd{X}}{\bd{X}}^{T})$ is positive definite,
$\bd{\Gamma}$ is of full rank. Then, (A9) is verified.\qed
\end{example}

\subsection{Simulation Results for Log-normally Distributed Covariates}
\begin{table}[htb]
\begin{threeparttable}
\caption{Robustness of GENMETA Estimation (Log-normally Distributed Covariates)}{%
\begin{tabular}{ccccccrcccc}
  Setting  & Study-I & Study-II & Study-III & Reference & $\beta_{i}^{*}$ & Bias & SD (ESD) & RMSE & CR & AL \\ 
\vspace{-1mm} &&&&&&&&&& \\
  & $\mu_b$ & $\mu_b$ & $\mu_b$& $\mu_b$ & $\beta_{1}^{*}$ &  .010 & .076 (.075) & .077 & .941 & .288 \\
I  &$\sigma_b^2$ & $\sigma_b^2$ & $\sigma_b^2$ & $\sigma_b^2$ &  $\beta_{2}^{*}$ & .011 & .064 (.061) & .065 & .947 & .237 \\
 & $\rho_b$ & $\rho_b$ & $\rho_b$ & $\rho_b$ & $\beta_{3}^{*}$ & .006 & .066 (.064) & .066 & .954 & .246 \\  
\vspace{-1mm} & & &&&&&&&& \\
  & $\mu_b$ & $\mu_h$ & $\mu_m$ &  $\mu_b$ & $\beta_{1}^{*}$ & .010 & .079 (.072) & .079 & .930 & .272 \\
  & $\sigma_b^2$  & $\sigma_b^2$ & $\sigma_b^2$ & $\sigma_b^2$ & $\beta_{2}^{*}$ &  .002 & .056 (.054) & .056 & .948 & .211 \\
  & $\rho_b$ & $\rho_b$ & $\rho_b$ & $\rho_b$ & $\beta_{3}^{*}$ & -.002 & .062 (.058) & .062 & .945 & .222 \\   
\vspace{-3mm} & &  &&&&&&&& \\
    & $\mu_b$ & $\mu_b$ & $\mu_b$ & $\mu_b$ & $\beta_{1}^{*}$ &  .032 & .088 (.088) & .094 & .930 & .339 \\
 II   & $\sigma_b^2$ & $\sigma_h^2$ & $\sigma_l^2$ & $\sigma_b^2$ &  $\beta_{2}^{*}$ &  -.002 & .062 (.057) & .062 & .941 & .221 \\
  & $\rho_b$ & $\rho_b$ & $\rho_b$ & $\rho_b$ & $\beta_{3}^{*}$ &  -.005 & .074 (.074) & .074 & .967 & .286 \\   
\vspace{-3mm} &&&&&&&&&&\\
   & $\mu_b$ & $\mu_h$ & $\mu_m$ &  $\mu_b$  & $\beta_{1}^{*}$ & .021 & .079 (.077) & .081 & .929 & .294 \\
  & $\sigma_b^2$ & $\sigma_h^2$ & $\sigma_l^2$ & $\sigma_b^2$ & $\beta_{2}^{*}$&  .0005 & .055 (.055) & .055 & .956 & .213 \\
   & $\rho_b$ & $\rho_b$ & $\rho_b$ & $\rho_b$ & $\beta_{3}^{*}$ &  -.008 & .065 (.064) & .065 & .954 & .246 \\   
\vspace{-1mm} & & &&&&&&&& \\
  & $\mu_b$ & $\mu_b$ & $\mu_b$ & $\mu_b$ & $\beta_{1}^{*}$ & -.062 & .107 (.118) & .124 & .934 & .382 \\
 & $\sigma_b^2$ & $\sigma_b^2$ & $\sigma_b^2$ & $\sigma_b^2$  & $\beta_{2}^{*}$&   .021 & .070 (.065) & .073 & .930 & .250 \\
  & $\rho_b$ & $\rho_b$ & $\rho_b$ & $\rho_h$ & $\beta_{3}^{*}$ &  .030 & .087 (.096) & .092 & .956 & .322 \\   
\vspace{-3mm} & & &&&&&&&& \\
  &  $\mu_b$ & $\mu_b$ & $\mu_b$ & $\mu_b$ & $\beta_{1}^{*}$ &  .039 & .072 (.069) & .081 & .891 & .264 \\
 &  $\sigma_b^2$ & $\sigma_b^2$ & $\sigma_b^2$ & $\sigma_b^2$ &  $\beta_{2}^{*}$&  .023 & .065 (.062) & .069 & .932 & .240 \\
  & $\rho_b$ & $\rho_b$ & $\rho_b$ & $\rho_l$ & $\beta_{3}^{*}$ &  .018 & .061 (.058) & .064 & .930 & .224 \\   
  \vspace{-3mm} & & &&&&&&&& \\
   &  $\mu_b$ & $\mu_b$ & $\mu_b$ & $\mu_b$ & $\beta_{1}^{*}$ &  .053 & .079 (.075) & .095 & .866 & .290 \\
 III &  $\sigma_b^2$ & $\sigma_b^2$ & $\sigma_b^2$ & $\sigma_b^2$ &  $\beta_{2}^{*}$&  .019 & .065 (.063) & .067 & .942 & .242 \\
  & $\rho_l$ & $\rho_b$ & $\rho_h$ & $\rho_l$ & $\beta_{3}^{*}$ &  .012 & .068 (.064) & .069 & .935 & .249 \\   
  \vspace{-3mm}  & & &&&&&&&& \\
   &  $\mu_b$ & $\mu_b$ & $\mu_b$ & $\mu_b$ & $\beta_{1}^{*}$ &  .032 & .089 (.084) & .095 & .912 & .322  \\
   &  $\sigma_b^2$ & $\sigma_b^2$ & $\sigma_b^2$ & $\sigma_b^2$ &  $\beta_{2}^{*}$&  .010 & .062 (.062) & .063 & .946 & .240 \\
  & $\rho_l$ & $\rho_b$ & $\rho_h$ & $\rho_b$ & $\beta_{3}^{*}$ &  -.009 & .073 (.071) & .073 & .942 & .273 \\   
  \vspace{-3mm}  & & &&&&&&&& \\
   &  $\mu_b$ & $\mu_b$ & $\mu_b$ & $\mu_b$ & $\beta_{1}^{*}$ &  -.025 & .113 (.108) & .116  & .954 & .407 \\
  &  $\sigma_b^2$ & $\sigma_b^2$ & $\sigma_b^2$ & $\sigma_b^2$ &  $\beta_{2}^{*}$&  .017 & .065 (.064) & .067 & .951 & .248 \\
& $\rho_l$ & $\rho_b$ & $\rho_h$ & $\rho_h$ & $\beta_{3}^{*}$ &  -.002 & .091 (.091) & .091 & .965 & .347 \\   
  \vspace{-1mm}  & & &&&&&&&& \\
 &  & & $\mu_b$ & $\mu_b$ & $\beta_{1}^{*}$ &  .007 & .096 (.104) & .096  & .968 & .365 \\
 IV &  $X_1>-0.5,$ & $X_2 > 0$ & $\sigma_b^2$ & $\sigma_b^2$ &  $\beta_{2}^{*}$&  .242 & .353 (.117) & .428 & .572 & .401 \\
  & $X_2<0.5$ &  & $\rho_b$ & $\rho_b$ & $\beta_{3}^{*}$ &  -.015 & .067 (.081) & .068 & .971 & .283 \\   \hline
\end{tabular}}
\label{table:robustnesslognormal}
\begin{tablenotes}[para,flushleft]
Biases,
standard deviation (SD),
estimated standard deviation (ESD),
square roots of mean square errors (RMSE),
coverage rates (CR),
and average lengths (AL)
of 95\% confidence intervals of the GENMETA estimates using the study covariance estimators in the setting of logistic regression. In setting (I), data are simulated in ideal setting where the covariate distribution is a log-normal distribution with the natural logarithm of the covariates being characterized by mean, sd and correlation of normal variates and are assumed to same across all populations. In setting (II)-(IV), the assumption is violated by creating variations in mean/sd, correlations of the underlying normal distribution and selection criterion across the studies and reference sample. The vector of means, variances and correlations of the underlying normal covariates are denoted by $\mu_*  = (\mu_1,\mu_2,\mu_3)$, $\sigma^2_*  = (\sigma_1^2,\sigma_2^2, \sigma_3^2)$ and $\rho_*  = (\rho_{12},\rho_{23},\rho_{13})$ for $* \in \{b,l,m,h\}$, where $\mu_b = (0,0,0)$, $\mu_m = (0.5,0.5,0.5)$, $\mu_h = (1,1,1)$; $\sigma_b^2 = (1,1,1)$, $\sigma_l^2 = (0.5,0.5,0.5)$, $\sigma_h^2 = (2,2,2)$ and $\rho_b = (0.3,0.6,0.1)$, $\rho_h = (0.4,0.8,0.2)$, $\rho_l = (0.2,0.4,0)$. Estimated standard deviation are obtained by the asymptotic formula (2) in the main paper and used to construct 95\% confidence interval.
\end{tablenotes}
\end{threeparttable}
\end{table}

\end{document}